\theoremstyle{definition}
\newtheorem{definition}{Definition}
\theoremstyle{plain}
\newtheorem{theorem}{Theorem}
\theoremstyle{plain}
\newtheorem{corollary}{Corollary}
\theoremstyle{definition}
\newtheorem{proposition}{Proposition}
\theoremstyle{definition}
\newtheorem{lemma}{Lemma}
\theoremstyle{definition}
\newtheorem{conjecture}{Conjecture}
\theoremstyle{remark}
\newtheorem{remark}{Remark}
\protected\def\dg{\ensuremath{\times}}
  \setlist{nosep}
\definecolor{dgreen}{rgb}{0.0, 0.5, 0.3} 
\definecolor{dyellow}{rgb}{8.0, 0.74, 0}
  \tikzset{trees/.style={
	inner sep=0, 
	minimum width=0, 
	minimum height=0,
	level distance=.5cm, 
	sibling distance=.5cm,
	edge from parent/.style={shorten <= \outdist, draw, ->},
	grow'=up,
	decoration={markings, mark=at position 0.75 with \arrow{stealth}}
	},
     out distance/.store in=\outdist,
      out distance={0pt},
 }
    \tikzset{
     oriented WD/.style={
        every to/.style={out=0,in=180,draw},
        label/.style={
           font=\everymath\expandafter{\the\everymath\scriptstyle},
           inner sep=0pt,
           node distance=2pt and -2pt},
        semithick,
        node distance=1 and 1,
        decoration={markings, mark=at position \stringdecpos with \stringdec},
        ar/.style={postaction={decorate}},
        execute at begin picture={\tikzset{
           x=\bbx, y=\bby,
           }}
        },
     string decoration/.store in=\stringdec,
     string decoration={\arrow{stealth};},
     string decoration pos/.store in=\stringdecpos,
     string decoration pos=.7,
	 	 dot size/.store in=\dotsize,
	   dot size=3pt,
	 	 dot/.style={
			 circle, draw, thick, inner sep=0, fill=black, minimum width=\dotsize
	   },
     bbx/.store in=\bbx,
     bbx = 1.5cm,
     bby/.store in=\bby,
     bby = 1.5ex,
     bb port sep/.store in=\bbportsep,
     bb port sep=1.5,
     bb port length/.store in=\bbportlen,
     bb port length=4pt,
     bb penetrate/.store in=\bbpenetrate,
     bb penetrate=0,
     bb min width/.store in=\bbminwidth,
     bb min width=1cm,
     bb rounded corners/.store in=\bbcorners,
     bb rounded corners=2pt,
     bb small/.style={bb port sep=1, bb port length=2.5pt, bbx=.4cm, bb min width=.4cm, 
bby=.7ex},
		 bb medium/.style={bb port sep=1, bb port length=2.5pt, bbx=.4cm, bb min width=.4cm, 
bby=.9ex},
     bb/.code 2 args={
        \pgfmathsetlengthmacro{\bbheight}{\bbportsep * (max(#1,#2)+1) * \bby}
        \pgfkeysalso{draw,minimum height=\bbheight,minimum width=\bbminwidth,outer 
sep=0pt,
           rounded corners=\bbcorners,thick,
           prefix after command={\pgfextra{\let\fixname\tikzlastnode}},
           append after command={\pgfextra{\draw
              \ifnum #1=0{} \else foreach \i in {1,...,#1} {
                 ($(\fixname.north west)!{\i/(#1+1)}!(\fixname.south west)$) +(-
\bbportlen,0) 
  coordinate (\fixname_in\i) -- +(\bbpenetrate,0) coordinate (\fixname_in\i')}\fi 
              \ifnum #2=0{} \else foreach \i in {1,...,#2} {
                 ($(\fixname.north east)!{\i/(#2+1)}!(\fixname.south east)$) +(-
\bbpenetrate,0) 
  coordinate (\fixname_out\i') -- +(\bbportlen,0) coordinate (\fixname_out\i)}\fi;
           }}}
     },
     bb name/.style={append after command={\pgfextra{\node[anchor=north] at 
(\fixname.north) {#1};}}}
  }
\tikzset{
  tick/.style={postaction={
    decorate,
    decoration={markings, mark=at position 0.5 with {\draw[-] (0,.4ex) -- (0,-.4ex);}}}
  },
  tickx/.style={
    postaction={ decorate,
      decoration={markings,
        mark=at position 0.5 with {
          \fill circle [radius=.28ex];
        }
      }
    }
  }
}
\tikzset{
  ttick/.style={postaction={
    decorate,
    decoration={markings, mark=at position 0.47 with {\draw[-] (0,.4ex) -- (0,-.4ex);}, mark=at position 0.53 with {\draw[-] (0,.4ex) -- (0,-.4ex);}
    }
  }
  },
  tickx/.style={
    postaction={ decorate,
      decoration={markings,
        mark=at position 0.5 with {
          \fill circle [radius=.28ex];
        }
      }
    }
  }
}
\tikzset{
  swish/.style={postaction={
    decorate,
    decoration={markings, mark=at position 0.2 with {\rotatebox{90}{$\sim$}}}}
  },
  swishx/.style={
    postaction={ decorate,
      decoration={markings,
        mark=at position 0.2 with {
          \fill circle [radius=.28ex];
        }
      }
    }
  }
}
\DeclareSymbolFont{stmry}{U}{stmry}{m}{n}
\DeclareMathSymbol\fatsemi\mathop{stmry}{"23}
\DeclareFontFamily{U}{mathx}{\hyphenchar\font45}
\DeclareFontShape{U}{mathx}{m}{n}{
      <5> <6> <7> <8> <9> <10>
      <10.95> <12> <14.4> <17.28> <20.74> <24.88>
      mathx10
      }{}
\DeclareSymbolFont{mathx}{U}{mathx}{m}{n}
\DeclareMathAccent{\widecheck}{0}{mathx}{"71}
\DeclareMathOperator{\Hom}{Hom}
\DeclareMathOperator*{\colim}{colim}
\DeclareMathOperator{\Ob}{Ob}
\newcommand{\cat}[1]{\mathcal{#1}}
\newcommand{\Cat}[1]{\mathbf{#1}}
\newcommand{\Fun}[1]{\mathrm{#1}}
\newcommand{\acset}{\mathrm{ACSet}}
\newcommand{\enriched}[1]{#1\text{-}\smcat}
\newcommand{\id}{\mathrm{id}}
\newcommand{\then}{\mathbin{\fatsemi}}
\newcommand{\Tr}[3]{\mathrm{Tr}^{#3}_{#1,#2}}
\newcommand{\To}[1]{\xrightarrow{#1}}
\newcommand{\from}{\leftarrow}
\newcommand{\From}[1]{\xleftarrow{#1}}
\newcommand{\inj}{\rightarrowtail}
\newcommand{\pto}{\rightharpoonup}
\newcommand{\tn}[1]{\textnormal{#1}}
\newcommand{\bb}{\mathbb{B}}
\newcommand{\nn}{\mathbb{N}}
\newcommand{\smset}{\Cat{Set}}
\newcommand{\smcat}{\Cat{Cat}}
\newcommand{\catsharp}{\smcat^\sharp}
\newcommand{\poly}{\Cat{Poly}}
\newcommand{\Maybe}{\Fun{Maybe}}
\newcommand{\List}{\Fun{List}}
\newcommand{\Writer}{\Fun{Writer}}
\newcommand{\Dist}{\Fun{Dist}}
\newcommand{\merge}[1]{\nabla_{#1}}
\newcommand{\qqand}{\qquad\text{and}\qquad}
\newcommand{\yon}{\mathcal{y}}
\newcommand{\tri}{\mathbin{\triangleleft}}
\newcommand{\cofree}[1]{\mathfrak{c}\langle#1\rangle}
\newcommand{\free}[1]{\mathfrak{m}\langle#1\rangle}
\newcommand{\meal}{\Cat{Meal}}
\newcommand{\mealquot}{\meal/\sim}
\newcommand{\dynmeal}{\Cat{DynMeal}}
\newcommand{\dk}[1][t]{\Cat{DK}_{#1}}
\newcommand{\trap}{\mathit{Trap}}
\newcommand{\iter}{\mathit{iter}}
\newcommand{\tripow}[1]{^{\tri #1}}
\newcommand*\circled[1]{\tikz[baseline=(char.base)]{
            \node[shape=circle,draw,solid,inner sep=2pt] (char) {#1};}}
\begin{document}
\title{Dynamic Tracing: a graphical language for rewriting protocols}

\author{ \href{https://orcid.org/0000-0002-9374-9138}{\includegraphics[scale=0.06]{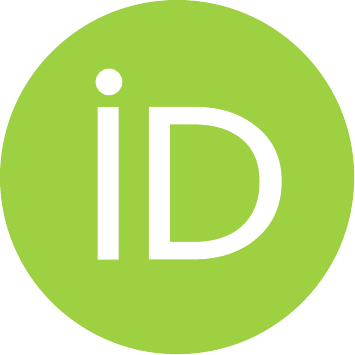}\hspace{1mm}Kristopher Brown} \\
	Topos Institute\\
	\texttt{kris@topos.institute} \\
	\And
	\href{https://orcid.org/0000-0002-9326-5328}{\includegraphics[scale=0.06]{orcid.pdf}\hspace{1mm}David I. Spivak} \\
	Topos Institute\\
	\texttt{david@topos.institute} 
} 
\date{}
\renewcommand{\headeright}{}
\renewcommand{\undertitle}{}

\maketitle 
\begin{abstract}
    The category $\smset_*$ of sets and partial functions is well-known to be traced monoidal, meaning that a partial function $S+U \pto T+U$ can be coherently transformed into a partial function $S \pto T$. This transformation is generally described in terms of an implicit procedure that must be run. We make this procedure explicit by enriching the traced category in $\catsharp$, the symmetric monoidal category of categories and cofunctors: each hom-category has such procedures as objects, and advancement through the procedures as arrows. We also generalize to traced Kleisli categories beyond $\smset_*$, providing a conjectural trace operator for the Kleisli category of any polynomial monad of the form $t+1$. The main motivation for this work is to give a formal and graphical syntax for performing sophisticated computations powered by graph rewriting, which is itself a graphical language for data transformation.

\keywords{Double pushout rewriting  \and category theory \and graph rewriting}
\end{abstract}

\section{Motivation}

Explicitly constructed programs are the standard means of specifying data transformation. However, by forgoing the full expressivity of a general programming language, one can work within a restricted syntax---a \emph{domain-specific language}---that has desirable properties. For example, flowcharts are often used as an informal syntax for software projects, where each box is associated with a subroutine that transforms data that flows on input wires into data that flows on output wires. Category theory makes this assignment of semantics precise: the syntax of directed wiring diagrams can formally be given a semantics in any symmetric monoidal category. There are advantages of this two-dimensional syntax over one-dimensional expression trees in a general purpose language, including transparent visualization, operadic substitution, and algebraic manipulation. Furthermore, when represented as a combinatorial object, the diagram itself is a particularly efficient normal form for a large number of syntax trees which it is equivalent to \cite{Patterson_2021}.    



The field of graph transformation uses the syntax of spans in an adhesive category, interpreting them as rewrite rules.
The semantics of deletion, copying, merging, and adding data can be attributed to these spans in various formalisms, notably DPO~\cite{ehrig1973graph}, SPO~\cite{lowe1993algebraic}, and SqPO~\cite{corradini2006sesqui} rewriting. If these sorts of operations are all one needs, it is advantageous to work with this syntax rather than general expression trees, as its rules are more easily visualized and subject to static analysis. 
 
Despite these virtues, there has been difficulty in applying graph transformation in engineering practice \cite{blostein1996issues,voss2023graph}. The cited reviews discuss various strategies for computation via graph rewriting. A straightforward method is unordered graph rewriting, where rewrite rules are applied in an arbitrary order, possibly with constraints; however, many applications require the expressivity of executing sequences of atomic rewrites in a systematic, domain-specific way, e.g.\ looping over a set of matches. Some earlier diagrammatic languages for this are based on directed graphs \cite{bunke1982attributed,fahmy1996reasoning}, where vertices are rewrite rules and edges are $\bb$-valued, indicating where to go next if the current rule either does match or doesn't match. There are also more expressive languages which propose a BNF grammar for graph programs\cite{schurr1991progress,plump2012design}. The control flow of such programs are implicit in the semantics given to constructors like \lstinline{while} rather than explicit (e.g.\ the looping of a diagram). Current abstractions for programming with graph rewriting make it difficult for engineers to collaborate due to incompatibilities between software implementations or between the ontologies presupposed by rewrite rules. Furthermore, fine-grained control over \emph{which} match is used in a rewrite is not emphasized.  

We will demonstrate how the structure of a rich class of data transformations can be given the structure of a symmetric monoidal category with a conjectural trace operator, which licenses the use of wiring diagrams with feedback loops as a graphical syntax. After describing this class abstractly, we demonstrate how this formalism guides the user interface and implementation of software designed to construct elaborate computations built up out of rewrite rules.

In \cref{chap.background} we lead up to the construction of a category $\dk$, parameterized by a polynomial monad~$t$, that will be sufficient for our rewriting application. In \cref{chap.dk_traced}, we show it is a cocartesian monoidal category enriched in $\catsharp$. We also offer a conjectural trace operator for monads meeting a certain criterion. In \cref{chap.application}, we produce a domain-specific language (DSL), with semantics in $\dk$, for constructing sophisticated programs that manipulate data via rewrite rules. We conclude with a summary and future work in \cref{chap.conclusion}.

\section*{Notation}

We assume familiarity with categories, functors, and enriched category theory. Many technical details related to $\poly$ and its many monoidal structures ($+,\times,\otimes,\tri$) can be referenced in \cite{poly}; we elide these to instead focus on providing intuitions for these constructions in Section \ref{chap.background}. In contrast, Section \ref{chap.dk_traced} requires a strong technical understanding of $\poly$. We will use the notation $[-,-]$ to refer to the internal hom in a category, while $(-,-)$ will be used for (co)pairing morphisms in a category with (co)products. We use $\sum$ to denote the disjoint union of sets. We often denote the identity on an object simply by the object name, e.g.\ we use $A$ to denote $\id_A$.

\section{Background}\label{chap.background}

In this section, we incrementally improve frameworks for modeling dynamical systems of the sort we need for rewriting by proposing a sequence of categories: $\meal$, $\mealquot$, $\dynmeal$, and finally, $\dk$. Each formalism should allow us to view a large system as the composition of smaller ones, where at any level of granularity we can consider local systems as enclosed boxes, which can be entered and exited and which evolve as we interact with them.

\subsection{Mealy machines}

 We first consider the category $\meal$ of Mealy machines:
\begin{align*}
    \Ob(\meal) &:= \Ob(\smset)\\
    \Hom_{\meal}(A,B) &:= \left\{\left(S : \smset,s_0 : S, 
    \phi\colon A\times S\to S\times B\right)\right\}
\end{align*}

A map could be called a \emph{dynamic function}. It includes a set $S$ of states and a particular state $s_0$. Further, given any state $s:S$, it provides both a function $A\to B$ and, for any input $a:A$, an updated state; all this is encoded in $\phi(-,s)\colon A\to S\times B$. Thus we can think of a morphism in $\meal$ as a function $A\to B$ that is updated that every time it receives an input. An example morphism in a toy model of chess pieces evading traps is given in Figure \ref{fig:Mealy}a-b.

\begin{figure}[h!]
    \centering
    \includegraphics[width=\textwidth]{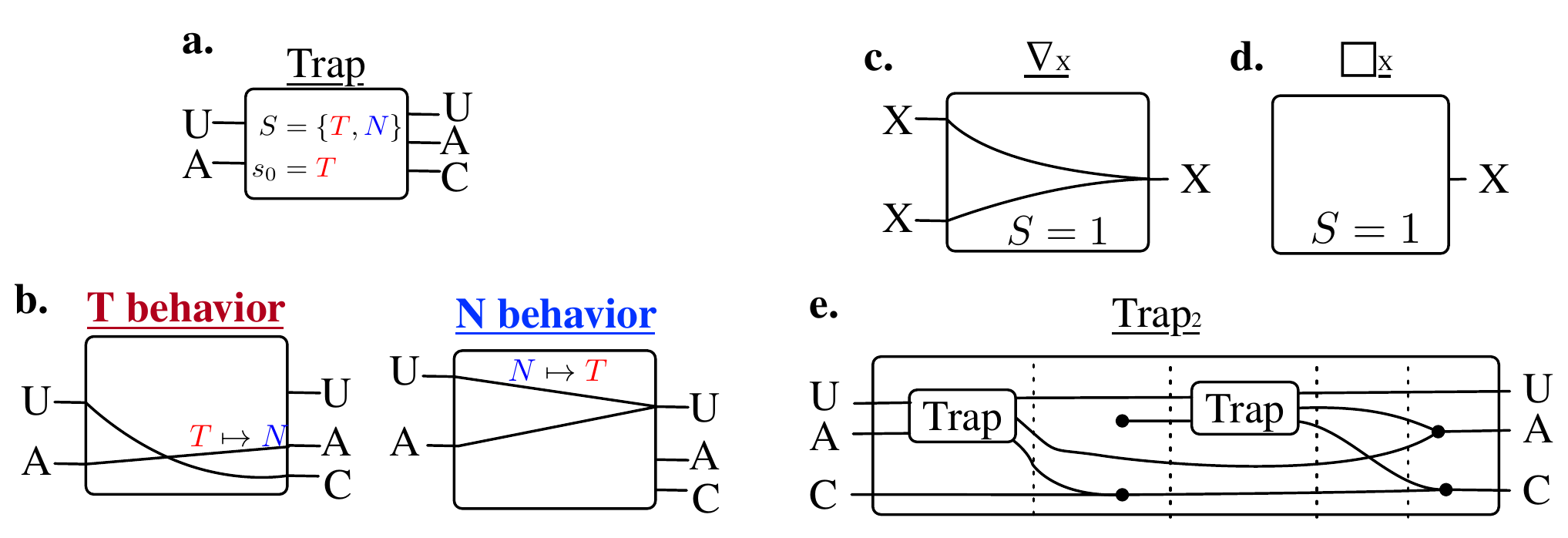}
    \caption{Suppose we play a game like chess where certain pieces may be in various conditions: \textbf{U}naware, \textbf{A}lert, and \textbf{C}aptured. \textbf{a.} A morphism $\trap : \Hom_{\meal}(\{U,A\},\{U,A,C\})$ represents a system where one's opponent may have set a trap to capture a piece: this is a dynamical system that accepts a piece (Unaware or Alert) and outputs an Unaware, Alert, or Captured piece. $\trap$ has two possible states: either there is a trap (\textcolor{red}{$T$}) or no trap(\textcolor{blue}{$N$}) set, with the system starting in state \textcolor{red}{$T$}. \textbf{b.}~$\trap$ requires a function $S\times \{U, A\}\to S\times \{U, A, C\}$, which is visualized by showing the dynamics in both possible states. In state \textcolor{red}{$T$}, unaware pieces fall for the trap and are captured, whereas alert pieces stay alert (and the state changes to \textcolor{blue}{$N$}). In state \textcolor{blue}{$N$}, all inputs exit as unaware, though a $U$ input triggers the state to change to \textcolor{red}{$T$}.  \textbf{c.}~A family of morphisms in $\meal$ of the form $\Hom_{\meal}(X+X,X)$, for any set $X$. We will visually depict this as a dot with multiple incoming wires. \textbf{d.} A family of morphisms in $\meal$ of the form $\Hom_{\meal}(\varnothing,X)$, for any set $X$. We will visually depict this as a dot with one outgoing wire. \textbf{e.} A composite dynamical system $\trap_2$ depicting pieces passing through two $\trap$ systems: Alert pieces exiting the first subsystem skip the second one. This diagram formally depicts the sequential composition of five morphisms, which are distinguished by the dotted vertical lines: $\trap\then(U+\square_A+ A + \merge C)\then(\trap + A+ C)\then(U+ A+ \sigma_{C,A}+ C)\then(U+ \merge A+\triangledown C)$. The state space of the composite system is $\{\textcolor{red}{T},\textcolor{blue}{N}\}\times \{\textcolor{red}{T},\textcolor{blue}{N}\}$, with initial state $(\textcolor{red}{T},\textcolor{red}{T})$.}
    \label{fig:Mealy}
\end{figure}

We claim $\meal$ is a symmetric monoidal category, with monoidal unit $\varnothing$ and the object $A \otimes B$ given by disjoint union of sets. For tensor and composition of morphisms, we define%
\footnote{
We are elide the symmetry $B\times S\cong S\times B$ in our notation `$\phi\times\psi$`.
}
\begin{align*}
(S,s_0,\phi) \otimes (T,t_0,\psi)&\coloneqq(S\times T, (s_0, t_0), \phi\times \psi)\\ (S,s_0,\phi) \then (T,t_0,\psi)&\coloneqq(S\times T, (s_0, t_0), (\phi\times T)\then(S \times \psi))
\end{align*}

Because $\meal$ is a symmetric monoidal category, there is a ready-made graphical language for visualizing serial and parallel compositions of its morphisms. The icons of this language are wires, each labeled with a set, boxes with input and output ports, also each labeled with a set, and braidings; these are described in detail in \cite[\S3.1, 3.5]{Selinger_2010}. In brief, data flows from left to right on wires, which are objects. The vertical dimension represents the tensoring $\otimes$ of objects and morphisms. We visualize an element of $\Hom_{\meal}(A,B)$ as a box with $|A|$ input ports and $|B|$ output ports. A diagrammatic example of a nontrivial composition of morphisms is provided in Figure \ref{fig:Mealy}e. 

We are ultimately working towards a traced monoidal category with coproducts; however, coproducts do not generally exist in $\meal$. 
The natural choice for a copairing of morphisms $(S,s_0,\phi) : \Hom_{\meal}(A,Z)$ and $(S',s'_0,\psi): \Hom_{\meal}(B,Z)$ is to place the boxes in parallel and merge their outputs, i.e.\ $(S,s_0,\phi) \otimes (S,s_0,\phi) \then \merge Z$, where $\merge{Z}\colon Z+Z\to Z$ is the codiagonal. This would have the correct \textit{behavior}, in a sense that will be made precise in the following section. However, the required coproduct equalities (e.g.\ $\iota_1\then(\phi,\psi) = \phi$) do not hold in general because the state spaces are not equal, i.e.\ $S\times S' \cong^? S$.

\subsection{Quotienting by behavioral equivalence}

 We now talk about behaviors as proper objects of study, rather than merely induced by Mealy machines. Although the set of behaviors of an Mealy machine with input $A$ and output $B$ can be succinctly characterized as the underlying set of the final coalgebra on $X\mapsto B^AX^A$, a set we will eventually denote by $\cofree{[A\yon,B\yon]}(1)$, we will more naturally be able to generalize our results by viewing behavior through the lens of polynomial functors; indeed, this is where the above notation comes from: $[A\yon,B\yon]$ is an internal hom in $\poly$ and $\cofree{-}$ is the cofree comonoid construction on $\poly$. 
 
 Polynomial functors are formally sums of representable functors $\smset\to\smset$, i.e.\ functors of the form $\yon^A := \Hom_\smset(A,-): \smset \to \smset$, sending any set $X$ to $X^A$. The sum is indexed by a set $I$ such that the polynomial can be denoted $p := \sum_{i : I} \yon^{A_i}$. Note that $I$ is canonically isomorphic to $p(1)$. The elements of $p(1)$ are called positions, while for each position $i: p(1)$ there is a set of directions $A_i$. We denote $A_i$ by $p[i]$, so we may write $p\cong\sum_{i:p(1)}\yon^{p[i]}$. Polynomial functors can be thought of as system interfaces, and a few ways to represent them are shown below.

\begin{center}
\[
\begin{array}{ccccccc}
\textbf{Interface}&&\textbf{Algebraic}&&\textbf{Bundle}&&\textbf{Corolla forest}\\
\begin{array}{l}
\{\text{\textcolor{red}{Listen}}\}\yon^{\{L,R\}} \\ 
+\ \text{\{\textcolor{dgreen}{Left}, \textcolor{blue}{Right}\}}\yon^{\text{\{Move\}}} \\ 
+\ \text{\{\textcolor{purple}{Stop}\}}\yon^{\{\}}
\end{array}
&&\yon^2+2\yon+1&&
\begin{tikzpicture}[baseline=(pi)]
  \foreach \n/\i/\c in {1/2/red,2/1/dgreen,3/1/blue,4/0/purple} {
  	\node at (.5*\n,0) (b\n) {\color{\c}$\bullet$};
		\node[] (b\n0) at (.5*\n,.8) {};
  	\ifthenelse{\i>0}{
		\foreach \j in {1,...,\i} {
		\node (b\n\j) at (.5*\n,.5+.3*\j) {$\bullet$};
		};
		}{};
  };
  \node[draw, inner sep=1pt, rounded corners=5, fit=(b1) (b4)] (B)  {};
  \node[draw, inner sep=1pt, rounded corners=5, fit=(b12) (b40)] (E) {};
  \draw[->, thick] (E) -- node[left, font=\tiny] (pi) {$\pi$} (B.north-|E);
\end{tikzpicture}
&&
\begin{tikzpicture}[trees, out distance=-2pt]
  \node (1) {$\textcolor{red}{\bullet}$} 
    child {}
    child {};
  \node[right=.5 of 1] (2) {$\textcolor{dgreen}{\bullet}$} 
    child {};
  \node[right=.5 of 2] (3) {$\textcolor{blue}{\bullet}$} 
    child {};
  \node[right=.5 of 3] (4) {$\textcolor{purple}{\bullet}$};
\end{tikzpicture}
\end{array}
\]   
    \captionof{figure}{Various representations of a polynomial functor that characterizes an interface which can do three things: it can listen (and receive a boolean value), it can move (left or right), or it can stop.}
    \label{fig:poly}

\end{center}

Polynomial comonads can be identified with categories \cite{Ahman_2016} and cofunctors, a category we denote $\catsharp$. The cofree comonad functor $\cofree{-}\colon\poly\to\catsharp$ can be thought of as sending a polynomial $p$ to a category whose objects are possible behavior trees of a system with interface $p$. These are potentially infinite trees and are obtained by starting with a root and stacking one-step behaviors, as seen in the corolla forest representation of the polynomial. To provide intuition for what these behavior trees look like, Figure~\ref{fig:morph} shows a tree corresponding to the Mealy machine of Figure~\ref{fig:Mealy}a-b as well as a tree for the above example interface. This allows us to define the category, $\mealquot$:

\begin{equation}\label{mealquot}
\begin{aligned}
\Ob(\mealquot) &:= \Ob(\smset)\\
\Hom_{\mealquot}(A,B) &:= \Ob\cofree{[A\yon,B\yon]}
\end{aligned}
\end{equation}

\begin{figure}[h!]
    \centering
    \begin{minipage}{.5\textwidth}
\[
\begin{tikzpicture}[trees, scale=1.5, level distance=20pt, out distance=1pt,
  level 1/.style={sibling distance=15mm},
  level 2/.style={sibling distance=5mm},
  level 3/.style={sibling distance=2.5mm},
  level 4/.style={sibling distance=1.25mm},
  level 5/.style={sibling distance=1.25mm}]
  \node[red] (a) {\large $\textcircled{T}$}
    child {node[red] {\large $\textcircled{T}$}
        child {node[red] {\large $\textcircled{T}$}
            child { }
            child { }
        edge from parent node[left] {$U$\ \ }
        }
        child {node[blue] {\large $\textcircled{N}$}
            child { }
            child { }
        edge from parent node[right] {\ $A$}
        }
    edge from parent node[left] {$U$\ \ }
    }
    child {node[blue] {\large $\textcircled{N}$}
        child {node[red] {\large $\textcircled{T}$}
            child { }
            child { }
        edge from parent node[left] {$U$\ \ }
        }
        child {node[blue] {\large $\textcircled{N}$}
            child { }
            child { }
        edge from parent node[right] {\ $A$}
        }
    edge from parent node[right] {\ $A$}
    }
  ;
\end{tikzpicture}
\]
\end{minipage}\begin{minipage}{.4\textwidth}
\[
\begin{tikzpicture}[trees, scale=1.5, out distance=.5pt,
  level 1/.style={sibling distance=17mm},
  level 2/.style={sibling distance=5mm},
  level 3/.style={sibling distance=2.5mm},
  level 4/.style={sibling distance=1.25mm},
  level 5/.style={sibling distance=1.25mm}]
  \node[red] (a) {$\bullet$}
        child {node[dgreen] {$\bullet$}
            child {node[red] {$\bullet$}
                        child {node[dgreen] {$\bullet$}
                child {node[purple] {$\bullet$}
                edge from parent node[left] {$\mathit{Move}$}
                }
            edge from parent node[left] {$L$\ }
            }
            child {node[blue] {$\bullet$}
                child {node[purple] {$\bullet$}
                edge from parent node[right] {$\mathit{Move}$}
                }
            edge from parent node[right] {\ $R$}
            }
            edge from parent node[left] {$\mathit{Move}$}
            }
        edge from parent node[left] {$L$\ }
        }
    child {node[blue] {$\bullet$}
        child {node[red] {$\bullet$}
            child {node[dgreen] {$\bullet$}
                child {node[purple] {$\bullet$}
                edge from parent node[left] {$\mathit{Move}$}
                }
            edge from parent node[left] {$L$\ }
            }
            child {node[blue] {$\bullet$}
                child {node[purple] {$\bullet$}
                edge from parent node[right] {$\mathit{Move}$}
                }
            edge from parent node[right] {\ $R$}
            }
        edge from parent node[right] {$\mathit{Move}$}
    }
    edge from parent node[right] {\ $R$}
    }
  ;
\end{tikzpicture}
\]
\end{minipage}
    \caption{\textbf{Left.} An object of $\cofree{[\{U, A\}\yon, \{U, A, C\}\yon]}$ which is the behavior of $\trap$ from Figure \ref{fig:Mealy}b. Note that $[2\yon,3\yon]\cong 3^2\yon^2$, i.e.\ it is an interface that receives one of three possible inputs and can be configured in any possible function $2\to 3$. We label two of these eight possible functions as \textcolor{red}{$T$} and \textcolor{blue}{$N$}. \textbf{Right.} One possible behavior for the interface of Figure \ref{fig:poly}, i.e.\ an object of the category $\cofree{\yon^2+2\yon+1}$. The machine starts in listening state, and upon receiving the input $L$ (respectively $R$) it moves left (resp. right). The machine repeats this once more and then stops.}
    \label{fig:morph}
\end{figure}

An important feature is not explicitly modeled. We have $\Hom(A,B)$ being a mere \textit{set} of behaviors, but we need to make use of a rich structure that this set of behaviors has when evolving the system over time: the way the system changes as new inputs are received. Enriching our category, i.e.\ replacing each set of morphisms by a category, will allow us to characterize how the morphisms change as inputs are received.

\subsection{Modeling the system evolution in time}

Recall that $\cofree{p}$ sends a polynomial $p$ to a category. Until now we have only considered the objects of that category. The morphism $\mathit{tree}_1\to \mathit{tree}_2$ is given by a path in $\mathit{tree}_1$, starting from the root and ending at a copy of $\mathit{tree}_2$ (see Figure \ref{fig:comon}). This is crucial for explicitly modeling how a sequence of inputs leads to a new dynamical system. We define $\dynmeal$ to have the same objects as $\meal$ but to be enriched in $\catsharp$ rather than $\smset$, such that the hom-object $\Hom_{\dynmeal}(A,B):=\cofree{[Ay,By]}$.

\begin{figure}[h!]
    \centering

\begin{minipage}{.4\textwidth}
\[
\begin{tikzpicture}[trees, scale=1.5,
  level 1/.style={sibling distance=20mm},
  level 2/.style={sibling distance=10mm},
  level 3/.style={sibling distance=5mm},
  level 4/.style={sibling distance=2.5mm},
  level 5/.style={sibling distance=1.25mm}]
  \node[dgreen] (a) {$\bullet$}
    child[ultra thick] {[solid] node[red] {$\bullet$}
        child {node[red] {$\bullet$}
            child[thin] {[solid] node[purple] {$\bullet$}
            edge from parent node[left] {$L$\quad }
            }
            child[thin] {[solid] node[blue] {$\bullet$}
                child {node[dgreen] {$\bullet$}
                    child {node[purple] {$\bullet$}
                edge from parent node[right] {$\mathit{Move}$}
                }
            edge from parent node[right] {$\mathit{Move}$}
                }
            edge from parent node[right] {$R$}
            }
        edge from parent node[left] {$L$\ \ \ }
        }
        child {[thin] node[blue] {$\bullet$}
            child {node[purple] {$\bullet$}
            edge from parent node[right] {$\mathit{Move}$}
            }
        edge from parent node[right] {\ $R$}
        }
    edge from parent node[left] {$\mathit{Move}$\ \ }
    } 
  ;
\end{tikzpicture}
\]
\end{minipage}\begin{minipage}{.1\textwidth}\mbox{\normalsize$\overset{\mathit{Move};L}\longrightarrow$}\end{minipage}\begin{minipage}{.4\textwidth}
\[
\begin{tikzpicture}[trees, scale=1.5,
  level 1/.style={sibling distance=5mm},
  level 2/.style={sibling distance=2.5mm},
  level 3/.style={sibling distance=1.25mm},
  level 4/.style={sibling distance=2.5mm},
  level 5/.style={sibling distance=1.25mm}]
  \node[red] (a) {$\bullet$}
    child {node[purple] {$\bullet$}
    edge from parent node[left] {$L$\ }
    }
    child {node[blue] {$\bullet$}
        child {node[dgreen] {$\bullet$}
            child {node[purple] {$\bullet$}
        edge from parent node[right] {$\mathit{Move}$}
        }
    edge from parent node[right] {$\mathit{Move}$}
    }
    edge from parent node[right] {\ $R$}
    }
  ;
\end{tikzpicture}
\]
\end{minipage}

    \caption{A morphism between two objects in $\cofree{\yon^2+2\yon+1}$, with the data of the morphism represented by thick arrows, i.e. $\mathit{Move}$, then $L$.}
    \label{fig:comon}
\end{figure}

The category $\catsharp$ is that of $\tri$-comonoids in $\poly$, which is equivalent to the category of categories and \emph{co}functors as proven in \cite{Ahman_2016,niupolynomial}. A cofunctor has a map in the forward direction on objects but a map on morphisms in the reverse direction. This means, for a composite system $\phi \cdot \psi$, we can construct a composite behavior tree given behavior trees for $\phi$ and $\psi$, but the way  $\phi$ and $\psi$ evolve over time is dictated by how $\phi \cdot \psi$ evolves over time, for example, see Figure \ref{fig:behaviormorph}. Every comonoid in $\poly$ has an underlying polynomial, and the associated functor $U$ has a right adjoint $\cofree{-}$:
\begin{equation}\label{eqn.cofree_adjunction}
\begin{tikzcd}
    \poly\ar[r, shift left=7pt, "\cofree{-}"]\ar[r, phantom, "\top"]&
    \catsharp\ar[l, shift left=7pt, "U"]
\end{tikzcd}
\end{equation}

\begin{figure}[h!]
    \centering
    
\begin{minipage}{.4\textwidth}
\[
\begin{tikzpicture}[trees, scale=2,level distance=20pt, out distance=2pt, 
  level 1/.style={sibling distance=10mm},
  level 2/.style={sibling distance=2.5mm},
  level 3/.style={sibling distance=5mm},
  level 4/.style={sibling distance=2.5mm},
  level 5/.style={sibling distance=1.25mm}]
  \node (a) {\circled{${\textcolor{red}{TT}}$}}
    child { node {\circled{$\textcolor{red}{TT}$}}
        child {[solid]}
        child {[solid]}
        child {[solid]}
    edge from parent node[left] {$U$\ \ \ }
    }
    child[ultra thick] {node {\circled{${\textcolor{blue}{N}\textcolor{red}{T}}$}}
        child {[thin]}
        child {[thin]}
        child {[thin]}
    edge from parent node[left] {$A$\ \ }
    }
    child { node {$\circled{\textcolor{red}{T}\textcolor{red}{T}}$}
        child {[solid]}
        child {[solid]}
        child {[solid]}
    edge from parent node[right] {\ $C$}
    }
  ;
\end{tikzpicture}
\]
\end{minipage}\begin{minipage}{.1\textwidth}\mbox{\Huge$\Longrightarrow$}\end{minipage}\begin{minipage}{.3\textwidth}
\[
\begin{tikzpicture}[trees, scale=1.5, level distance=20pt, out distance=2pt,
  level 1/.style={sibling distance=10mm},
  level 2/.style={sibling distance=2.5mm},
  level 3/.style={sibling distance=1.25mm},
  level 4/.style={sibling distance=2.5mm},
  level 5/.style={sibling distance=1.25mm}]
  \node[red] (a) {$\circled{T}$}
    child { node[red] {$\circled{T}$}
        child {[thin] }
        child {[thin] }
    edge from parent node[left] {$U$\ \ }
    }
    child[ultra thick] {node[blue] {$\circled{N}$}
        child {[thin] }
        child {[thin] }
    edge from parent node[right] {\ \ $A$}
    }
  ;
\end{tikzpicture}
\]
\end{minipage}\begin{minipage}{.3\textwidth}
\[
\begin{tikzpicture}[trees, scale=1.5,level distance=20pt, out distance=2pt,
  level 1/.style={sibling distance=10mm},
  level 2/.style={sibling distance=2.5mm},
  level 3/.style={sibling distance=1.25mm},
  level 4/.style={sibling distance=2.5mm},
  level 5/.style={sibling distance=1.25mm}]
  \node[red] (a) {$\circled{T}$}
    child {node[red] {$\circled{T}$}
        child {[solid] }
        child {[solid] }
    edge from parent node[left] {$U$\ \ }
    }
    child { node[blue] {$\circled{N}$}
        child {[solid] }
        child {[solid] }
    edge from parent node[right] {$A$}
    }
  ;
\end{tikzpicture}
\]
\end{minipage}

\begin{minipage}{.4\textwidth}
$:\Hom \cofree{[\{U, A, C\} \yon, \{U, A, C\} \yon]}$
\end{minipage}\begin{minipage}{.2\textwidth}
\ 
\end{minipage}\begin{minipage}{.4\textwidth}
$:\Hom \cofree{[\{U, A\}\yon, \{U, A, C\}\yon]}$
\end{minipage}

    \caption{Example of the change in behavior of the composite system $\trap_2$ (Figure~\ref{fig:Mealy}e) in response to an input  $A$. Just like in Figure~\ref{fig:morph}, where $\textcolor{red}{T}$ and $\textcolor{blue}{N}$ represented elements of $3^2$, the labels $\textcolor{red}{TT}$ and $\textcolor{blue}{N}\textcolor{red}{T}$ refer to elements of $3^3$, induced by where $\trap_2$ sends its inputs  data of the wiring diagram of $\trap_2$ instructs how to convert this $A$-interaction into interactions for its $\trap$ subcomponents. Note that the second $\trap$ subcomponent is not updated at all for an $A$ input, i.e. its update is an identity morphism in $\cofree{[\{U, A\}\yon, \{U, A, C\}\yon]}$.}
    \label{fig:behaviormorph}
\end{figure}

There is a strong connection between this story and that of coalgebras. For any polynomial $p$, the set of $\cofree{p}$ objects is isomorphic to the underlying set of the final coalgebra on $X\mapsto B^AX^A$. The category of functors $\cofree{p}\to\smset$ is equivalent to the category of $p$-coalgebras. Each behavior tree in $\cofree{p}$ is sent to the set of states with that behavior. 

The enriched structure of $\dynmeal$ now makes explicit how our behavior trees change in response to inputs. The last improvement to be made is one of expressivity, e.g.\ adding the possibility of entering an environment and failing to ever exit it, considering lists of possible outcomes, and considering probability distributions of possible outcomes.

\subsection{Adding monadic effects}

The expressivity captured so far can be vastly generalized by incorporating a polynomial monad $(t,\eta,\mu)$ on $\smset$ into our morphisms. Particular monads of interest are

\begin{equation}\label{eqn.monads_of_interest}
\Maybe= \yon+1,\qquad
\Writer_M= M\yon, \qquad
\List= \sum_{N:\nn}\yon^N, \qquad
\text{and }\Dist= \sum_{N:\nn}\Delta_N\yon^N
\end{equation}
where $M$ is a monoid and ${\Delta_N=\{P:N \to [0,1]\ |\ 1=\sum P(i)\}}$. Our work so far has been general to polynomials, not merely interfaces of the form $Ay\to By$, so monadic effects can be incorporated into our category by considering the Hom object of an interface $A\to B$ to be $\cofree{[Ay,t \tri By]}$.

\begin{definition}[Dynamic Kleisli category]\label{def.dynamic_kleisli}

Given a polynomial monad $t$, we define a category enriched in $\catsharp$, denoted $\dk$, as follows:

\begin{align*}
\Ob(\dk) &:= \Ob(\smset)\\
\Hom_{\dk}(A,B) &:= \cofree{[Ay, t \tri By]}
\end{align*}
\end{definition}

\section{$\dk$ as a traced monoidal category}\label{chap.dk_traced}

Throughout this section we assume that $(t,\eta,\mu)$ is a Cartesian polynomial monad. This is sufficient to show that $\dk$ is a cocartesian monoidal $\catsharp$-enriched category. In \cref{sec.dk_cat} we will show it is a $\catsharp$-enriched category, and in \cref{sec.dk_coprod} we will show it has coproducts.

In fact, we would like $\dk$ to be traced monoidal, meaning that there are morphisms
\begin{equation}\label{eqn.tracemap}
\Tr{A}{B}{U}\colon\Hom_{\dk}(A\otimes U,B\otimes U)\to\Hom_{\dk}(A,B)
\end{equation}
satisfying various compatibility conditions. Traced monoidal categories have a graphical syntax that includes loops
\[
\begin{tikzpicture}[oriented WD,baseline=(cod.center), bbx=1em, bby=1ex]
    \node[bb={2}{2}, bb name=$f$] (dom) {};
    \node[bb={0}{0}, fit={(dom) ($(dom.north east)+(1,6)$) ($(dom.south west)-(1,1)$)}, bb name = $\tn{Tr}(f)$] (cod) {};
    \coordinate (cod_in1) at (dom_in2-|cod.west);
    \coordinate (cod_out1) at (dom_out2-|cod.east);
	\draw[ar,pos=20] (cod_in1) to (dom_in2);
	\draw[ar,pos=2] (dom_out2) to (cod_out1);
	\draw[ar] let \p1=(dom.north east), \p2=(dom.north west), \n1={\y2+\bby}, \n2=\bbportlen in (dom_out1) to[in=0] (\x1+\n2,\n1) to[out=180,in=0] node[above=2pt, label] {$U$} (\x2-\n2,\n1) to[out=180] (dom_in1);
	\draw[label] 
		node[below left=2pt and 3pt of dom_in2]{$A$}
		node[below right=2pt and 3pt of dom_out2]{$B$};
\end{tikzpicture}
\]
(see also \cite{Selinger_2010}). This notion was defined for $\smset$-enriched categories in \cite{joyal1996traced}, and the definition can be extended to the $V$-enriched setting by asking that the trace map \eqref{eqn.tracemap} be a map in $V$. In Section \ref{sec.dk_traced}, we will propose a trace map for $\dk$, whenever $t$ is exceptional in the sense of Definition \ref{def.exceptional}.

\subsection{$\catsharp$-enriched category structure on $\dk$}\label{sec.dk_cat}
In this section we suppose a good deal more familiarity with $\poly$. The proposed $\catsharp$-enriched category structure on $\dk$ was given in Definition \ref{def.dynamic_kleisli}; our first goal in this section is to prove that it satisfies the correct properties. In \cref{sec.dk_coprod} we will show that it is monoidal.

For any lax monoidal functor $\cat{V}\to\cat{W}$, there is an induced functor $\enriched{\cat{V}}\to\enriched{\cat{W}}$. For all monoidal categories $(\cat{V},I,\otimes)$, the functor $\cat{V}(I,-)\colon\cat{V}\to\smset$ is lax monoidal, and we call the induced functor $\enriched{V}\to\smcat$ as the underying category functor. We say that a category $\cat{C}$ is enriched in $\cat{V}$ when there is a $\cat{V}$-category for which $\cat{C}$ is the underlying category. Our second goal in this section is to show that the usual Kleisli category $\smset_t$ is enriched in $(\catsharp,\yon,\otimes)$.

Our strategy for the first goal is to show that $\smset_t$ is enriched in $(\poly,\yon,\otimes)$.

\begin{theorem}
Let $(t,\eta,\mu)$ be a polynomial monad. The Kleisli category $\smset_t$ is enriched in $(\poly,\yon,\otimes)$.
\end{theorem}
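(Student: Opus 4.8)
The plan is to exhibit explicit data making $\smset_t$ into a $(\poly,\yon,\otimes)$-enriched category and then to verify the enrichment axioms. For each pair of sets $A,B$ we take the hom-object to be the polynomial $[A\yon, t\tri B\yon]$, the internal hom in $\poly$ from $A\yon$ to the composite $t\tri B\yon$. The first sanity check is that this is the right object: since $(\poly, \yon, \otimes)$ is closed symmetric monoidal, $[A\yon, t\tri B\yon]$ exists, and by the universal property its positions are exactly the maps $A\yon \to t\tri B\yon$ in $\poly$, i.e.\ (after unwinding the formula $[p,q]=\sum_{\varphi:p\to q}\yon^{\sum_{i:p(1)}q[\varphi(i)]}$) the functions $A\to (t\tri B\yon)(1) = \sum_{a:t(1)} B^{t[a]}$, which is precisely the underlying-set data of a Kleisli morphism $A\to tB$. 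So on positions the hom-polynomial recovers $\Hom_{\smset_t}(A,B)$; this is what "underlying category" will mean once we apply $\poly(\yon,-)$.

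Next I would write down the composition and identity morphisms \emph{in $\poly$}. The identity element of $\smset_t(A,A)$ is $\eta_A$, so the unit map $\yon \to [A\yon, t\tri A\yon]$ should be the name of the composite $A\yon \xrightarrow{A\cdot\eta} A\yon\tri t \cong t\tri A\yon$ (using that $A\yon$ is a "linear" polynomial so it commutes past $t$ up to the Cartesian/distributivity structure). For composition we need a map in $\poly$
\[
[A\yon, t\tri B\yon]\otimes [B\yon, t\tri C\yon] \too [A\yon, t\tri C\yon].
\]
The construction is the evident one: given names of $f\colon A\yon\to t\tri B\yon$ and $g\colon B\yon\to t\tri C\yon$, form $A\yon \xrightarrow{f} t\tri B\yon \xrightarrow{t\tri g} t\tri t\tri C\yon \xrightarrow{\mu\tri C\yon} t\tri C\yon$, and check this assignment is the position-and-direction data of an actual $\poly$-morphism out of the tensor (here one uses that $\otimes$ on these internal-hom polynomials is computed by the standard formula and that $\tri$ is functorial, plus that $t$ Cartesian means $t\tri(-)$ preserves the relevant limits so the direction maps assemble correctly).

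Then the enrichment axioms—associativity and left/right unitality of this composition—must be checked as equalities of morphisms in $\poly$. My strategy is to reduce each to its two components (a function on positions and, for each position, a function on directions) and observe that the position-level equation is exactly the associativity/unit law of the monad $(t,\eta,\mu)$ together with coherence of $\tri$, while the direction-level equation is forced by the universal property of the internal hom (maps into $[p,q]$ are determined by their adjunct maps $p\otimes(-)\to q$, so it suffices to check the adjuncts agree, which again is monad coherence). I expect the main obstacle to be purely bookkeeping: keeping straight the isomorphisms $A\yon\tri t\cong t\tri A\yon$ and the interaction of $\otimes$ with $\tri$ and with internal homs, and confirming that "$t$ Cartesian" is exactly the hypothesis that makes the direction maps well-defined (the direction set of $t\tri B\yon$ at a position involves a dependent sum over $t[a]$, and Cartesianness keeps these fibers from misbehaving under substitution). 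Once the $(\poly,\yon,\otimes)$-enrichment is established, applying $\poly(\yon,-)\colon\poly\to\smset$—which is lax monoidal, hence induces $\enriched{\poly}\to\smcat$—recovers $\smset_t$ as the underlying ordinary category, completing the proof; the upgrade to $\catsharp$-enrichment promised in the surrounding text then follows by postcomposing with $\cofree{-}$, but that is the content of the next result rather than this theorem.
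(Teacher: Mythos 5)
Your overall strategy matches the paper's: same hom-objects $[A\yon,t\tri B\yon]$, unit given by (the name of) $\eta_A$, composition intended to be Kleisli composition, and recovery of $\smset_t$ by applying $\poly(\yon,-)$. But the heart of the proof---actually constructing the composition as a morphism in $\poly$---is where your proposal has a genuine gap. Describing what composition does ``given names of $f$ and $g$'' only pins down the map on positions, since $\poly(\yon,p)\cong p(1)$; a polynomial map also needs a backward component on directions, and in particular you must say how a direction of $[A\yon,t\tri C\yon]$ at $f\then g$ is sent back to a pair of directions at $(f,g)$. The step ``apply $t\tri g$'' is exactly the problem: $g$ is a \emph{variable} position of $[B\yon,t\tri C\yon]$, not a fixed map, so postcomposition inside $t\tri(-)$ must itself be packaged as a map $(t\tri B\yon)\otimes[B\yon,t\tri C\yon]\to t\tri t\tri C\yon$. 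The paper does this by currying along the $\otimes$/$[-,-]$ adjunction and then using the evaluation map together with the duoidal interchange $(t\tri B\yon)\otimes(\yon\tri[B\yon,t\tri C\yon])\To{}(t\otimes\yon)\tri(B\yon\otimes[B\yon,t\tri C\yon])$, followed by $\mu$; it is this duoidal step (absent from your sketch) that produces the direction maps automatically, after which associativity and unitality are routine.

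Two further corrections. First, your appeal to ``$t$ Cartesian'' as ``exactly the hypothesis that makes the direction maps well-defined'' is off target: the theorem is stated for an arbitrary polynomial monad, the paper's construction uses only $\eta$, $\mu$, evaluation, and duoidality, and Cartesianness enters only later (for the exceptional-monad material). Second, the isomorphism $A\yon\tri t\cong t\tri A\yon$ you use for the unit is false in general (e.g.\ $A=2$, $t=\yon^2$ gives $2\yon^2$ versus $4\yon^2$); the correct construction is $A\yon\cong\yon\tri A\yon\To{\eta\tri A\yon}t\tri A\yon$, whose name $\yon\to[A\yon,t\tri A\yon]$ is the unit.
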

\begin{proof}
For sets $A,B:\smset$, define the polynomial
\begin{equation}\label{eqn.polyhom}
h^t_{A,B}\coloneqq[A\yon,t\tri B\yon].
\end{equation}
Maps of the form $\yon\to h^t_{A,B}$ are in bijection with polymomial maps $A\yon\to t\tri B\yon$. Note that these are in bijection with functions $A\to t\tri B$, which are exactly the Kleisli morphisms, i.e.\ we have
\begin{equation}\label{eqn.enriched}
\poly(\yon,h^t{A,B})\cong\smset_t(A,B).
\end{equation}
So it suffices to define an identity and a unital and associative composition law for the hom-objects $h^t_{A,B}:\poly$.

For the identity on $A$ we use
\[A\yon\cong\yon\tri A\yon\To{\eta\tri A\yon}t\tri A\yon.\]
Maps of the form $h^t_{A,B}\otimes h^t_{B,C}\to h^t_{A,C}$ are in bijection with maps $A\yon\otimes[A\yon,t\tri B\yon]\otimes[B\yon,t\tri C\yon]\to C\yon$, so beginning with the evaluation map $A\yon\tri[A\yon,t\tri B\yon]\to t\tri B\yon$, it suffices to find a map $(t\tri B\yon)\otimes[B\yon,t\tri C\yon]\to C\yon$. Since $\poly$ is duoidal, we have the desired map:
\[
    (t\tri B\yon)\otimes(\yon\tri[B\yon,t\tri C\yon])\To{\tn{duoid}} (t\otimes\yon)\tri(B\yon\otimes[B\yon,t\tri C\yon])\To{\tn{eval}} t\tri t\tri C\yon\To{\mu}t\tri C.
\]
It is easy to check that these definitions are associative and unital.
\end{proof}

\begin{corollary}
Let $(t,\eta,\mu)$ be a polynomial monad. The Kleisli category $\smset_t$ is enriched in $(\catsharp,\yon,\otimes)$.
\end{corollary}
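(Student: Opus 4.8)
The plan is to exhibit the cofree-comonad functor $\cofree{-}\colon\poly\to\catsharp$ as a lax monoidal functor $(\poly,\yon,\otimes)\to(\catsharp,\yon,\otimes)$ and then simply transport the $\poly$-enrichment of $\smset_t$ produced by the preceding theorem along it, using the induced functor $\enriched{\poly}\to\enriched{\catsharp}$ recalled at the start of this section. Concretely, applying that induced functor to the $\poly$-category with hom-objects $h^t_{A,B}=[A\yon,t\tri B\yon]$ yields a $\catsharp$-category with the same objects $\Ob(\smset)$ and hom-objects $\cofree{h^t_{A,B}}=\cofree{[A\yon,t\tri B\yon]}$, i.e.\ exactly the data of $\dk$ in \cref{def.dynamic_kleisli}; its identities and composition are the images under $\cofree{-}$ of the ones constructed in the theorem's proof.

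The first thing I would establish is that the forgetful functor $U\colon\catsharp\to\poly$ is strong monoidal with respect to $\otimes$. This is an instance of the general duoidal fact: in $\poly$ equipped with its duoidal structure relating $\otimes$ and $\tri$, the $\tri$-comonoids form an $\otimes$-monoidal category whose forgetful functor to $\poly$ is strong monoidal---the comonoid structure on $Uc\otimes Ud$ coming from the duoidal interchange together with the comultiplications of $c$ and $d$, and the $\otimes$-unit of $\catsharp$ being $\yon$ (the terminal category) itself, whose underlying polynomial is the $\otimes$-unit $\yon$ of $\poly$. Given this, Kelly's doctrinal adjunction applies to the adjunction $U\dashv\cofree{-}$ of \eqref{eqn.cofree_adjunction}: since the left adjoint $U$ is (op)lax monoidal, the right adjoint $\cofree{-}$ acquires a canonical lax monoidal structure, with structure maps $\cofree{p}\otimes\cofree{q}\to\cofree{p\otimes q}$ and $\yon\to\cofree{\yon}$ obtained as the mates of the (iso) structure maps of $U$.

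It then remains to check that the underlying category of the transported $\catsharp$-category is the ordinary Kleisli category $\smset_t$. The underlying-category functor $\catsharp\to\smcat$ is itself change of enrichment along $\catsharp(\yon,-)$, and change of enrichment is pseudofunctorial, so the underlying category of $\dk$ is the change of the $\poly$-enrichment of $\smset_t$ along the composite $\catsharp(\yon,-)\circ\cofree{-}$. The adjunction gives a natural isomorphism $\catsharp(\yon,\cofree{p})\cong\poly(U\yon,p)=\poly(\yon,p)$, compatible with the lax monoidal structures, so this composite is $\poly(\yon,-)$; hence the underlying category of $\dk$ equals the underlying category of the $\poly$-enriched $\smset_t$. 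By \eqref{eqn.enriched} that underlying category has hom-sets $\poly(\yon,h^t_{A,B})\cong\smset_t(A,B)$, and its identities and composition are those of $\smset_t$---the routine verification already implicit in the theorem.

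I expect the only real obstacle to be the first step: pinning down the $\otimes$-monoidal structure on $\catsharp$ precisely enough to see that $U$ is strong (hence oplax) monoidal, so that doctrinal adjunction can be invoked. Everything after that is formal transport of structure plus bookkeeping. An alternative to citing the abstract duoidal-comonoid fact would be to write the comparison map $\cofree{p}\otimes\cofree{q}\to\cofree{p\otimes q}$ down by hand and verify the lax monoidal coherence axioms directly, but that route is considerably messier and I would avoid it.
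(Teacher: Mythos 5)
Your proposal is correct and follows essentially the same route as the paper: transport the $\poly$-enrichment of $\smset_t$ along the lax monoidal functor $\cofree{-}$ to get the $\catsharp$-category $\dk$, then identify the underlying category via $\catsharp(\yon,\cofree{h^t_{A,B}})\cong\poly(\yon,h^t_{A,B})\cong\smset_t(A,B)$ using the adjunction \eqref{eqn.cofree_adjunction} and \eqref{eqn.enriched}. The only difference is that you also justify the lax monoidality of $\cofree{-}$ (via strong monoidality of $U$ and doctrinal adjunction), a fact the paper simply cites.
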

\begin{proof}
    For sets $A,B$, let $h^t_{A,B}:\poly$ be as in \eqref{eqn.polyhom}. Since the functor $\cofree{-}\colon\poly\to\catsharp$ is lax monoidal, we can define hom-objects
    \[    \dk(A,B)\coloneqq\cofree{h^t_{A,B}}=\cofree{[A\yon,t\tri B\yon]}
    \]
    giving rise to a $\catsharp$-category. To see that its underlying category is $\smset_t$, we use \cref{eqn.enriched,eqn.cofree_adjunction}
    \[  \catsharp(\yon,\cofree{h^t_{A,B}})\cong\poly(\yon,h^t_{A,B})\cong\smset_t(A,B),
    \]
    completing the proof.
\end{proof}

As a $\catsharp$-category, $\dk$ not only has an underlying ordinary category, induced by the lax monoidal functor $\catsharp(\yon,-)\colon\catsharp\to\smset$, but another corresponding ordinary category as well, induced by the lax monoidal functor $\Ob\colon\catsharp\to\smset$. There is a natural transformation between these two:
\[
\begin{tikzcd}[column sep=50pt]
    (\mathbf{\catsharp},\yon,\otimes)
        \ar[r, bend left=10pt, shift left=3pt, "{\catsharp(\yon,-)}", ""' name=top]
        \ar[r, bend right=10pt, shift right=3pt, "\Ob"', "" name=bot]&
    (\smset,1,\times)
    \ar[from=top, to=bot, Rightarrow]
\end{tikzcd}
\]
By Definition \eqref{mealquot}, we recover $\mealquot$ from $\dk$ by locally applying $\Ob$, as in the following.

\begin{proposition}
The functor $\enriched{\catsharp}\to\enriched{\smset}$ induced by $\Ob\colon\catsharp\to\smset$ sends $\dk\mapsto\mealquot$.
\end{proposition}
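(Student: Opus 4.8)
The plan is to unwind both sides of the claimed equality against the definitions already in play and observe they agree objectwise. On the left, $\dk$ is the $\catsharp$-category with hom-objects $\cofree{h^t_{A,B}} = \cofree{[A\yon, t\tri B\yon]}$; applying the functor $\enriched{\catsharp}\to\enriched{\smset}$ induced by $\Ob\colon\catsharp\to\smset$ produces an ordinary category with the same objects $\Ob(\smset)$ and with hom-sets $\Ob\bigl(\cofree{[A\yon, t\tri B\yon]}\bigr)$. On the right, $\mealquot$ is defined in \eqref{mealquot} to have objects $\Ob(\smset)$ and hom-sets $\Ob\cofree{[A\yon, B\yon]}$. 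So the proposition is essentially the statement that these two data coincide, modulo the difference between $[A\yon, B\yon]$ and $[A\yon, t\tri B\yon]$ — which is exactly accounted for by the fact that the proposition is implicitly about the Kleisli monad $t$ being trivial, or more precisely, that $\mealquot$ as written in \eqref{mealquot} is the $t = \yon$ (identity monad) instance of the general $\dk$ construction.

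Concretely, I would first recall that for a lax monoidal functor $F\colon\cat V\to\cat W$, the induced functor $\enriched{\cat V}\to\enriched{\cat W}$ acts on a $\cat V$-category by keeping the same object set and replacing each hom-object $\cat C(A,B)$ by $F\bigl(\cat C(A,B)\bigr)$, transporting the composition and identity along the lax structure maps of $F$. Then I would specialize to $F = \Ob\colon\catsharp\to\smset$, which is lax monoidal for $(\catsharp,\yon,\otimes)\to(\smset,1,\times)$: a cofunctor between categories induces a function on underlying object sets, $\Ob(\yon)$ is the one-element set, and $\Ob(-\otimes-)\cong\Ob(-)\times\Ob(-)$ since $\otimes$ of polynomial comonads has object set the product of the object sets. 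Applying this to $\dk$ yields the ordinary category with objects $\Ob(\smset)$ and hom-sets $\Ob\cofree{[A\yon, t\tri B\yon]}$, and it remains to match composition and identity with those of $\mealquot$.

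For the hom-sets to literally match the $\mealquot$ of \eqref{mealquot}, the cleanest route is to observe that \eqref{mealquot} is the instance of Definition \ref{def.dynamic_kleisli} with $t$ the identity monad $\yon$ (so that $t\tri B\yon \cong B\yon$), and then the statement is really the naturality/commutativity: locally applying $\Ob$ to $\dk[\yon]$ recovers $\mealquot$. The argument is then just: objects agree on the nose ($\Ob(\smset)$ on both sides); hom-sets agree because $\Ob\cofree{[A\yon,\yon\tri B\yon]} = \Ob\cofree{[A\yon, B\yon]}$; and the composition and identity of $\mealquot$ are by construction obtained from those of $\catsharp$-enriched $\dk[\yon]$ by applying $\Ob$, since $\mealquot$ was originally only ever given as the object-level shadow of the enriched structure. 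If instead the intended reading keeps $t$ general, one observes that the $\mealquot$ of \eqref{mealquot} in the text was defined before monads were introduced and corresponds precisely to the $t=\yon$ case, so no generality is lost.

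I do not expect a serious obstacle here; the proposition is a bookkeeping/coherence statement rather than a substantive theorem. The one point requiring a line of care is checking that $\Ob\colon\catsharp\to\smset$ really is lax (indeed strong) monoidal with respect to $\otimes$ and $\yon$ — i.e. that the object set of a $\tri$-comonoid tensor product in $\poly$ is the product of object sets, and that the comonoid unit $\yon$ has a singleton object set — and that the induced composition on $\Ob$-hom-sets is the one intended for $\mealquot$. This is where I would spend the bulk of the (short) proof; everything else is a matter of matching definitions.
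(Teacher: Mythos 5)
Your proposal is correct and matches the paper, which offers no explicit proof at all: the proposition is presented as an immediate consequence of Definition \eqref{mealquot} and the construction of $\dk$, i.e.\ exactly the definitional unwinding you perform (same objects, hom-sets obtained by applying $\Ob$ locally, composition transported along the lax structure of $\Ob$). Your added care about $\Ob\colon\catsharp\to\smset$ being lax monoidal and about reconciling $[A\yon,B\yon]$ with $[A\yon,t\tri B\yon]$ (reading \eqref{mealquot} as the $t=\yon$ instance, or $\mealquot$ as implicitly carrying the monad) only makes explicit what the paper leaves tacit.
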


\subsection{Coproduct monoidal structure on $\dk$}\label{sec.dk_coprod}

We next show that the $\catsharp$-category $\dk$ has coproducts, as defined in \cite[\S3.8]{kelly1982basic}
and explicated in \cite{4354063}.

We need to show that for any two sets $A,B:\smset$, there are morphisms
\[
i\colon\yon\to\cofree{h^t_{A,A+B}}
\qqand
j\colon\yon\to\cofree{h^t_{B,A+B}}
\]
such that for each $C:\smset$, each $v:\catsharp$ and each pair of maps $f\colon v\to\cofree{h^t_{A,C}}$ and $g\colon v\to\cofree{h^t_{B,C}}$, there is a unique morphism $(f,g)\colon v\to\cofree{h^t_{A+B,C}}$ such that the following diagrams commute
\begin{equation}\label{eqn.coprod}
\begin{tikzcd}[column sep=35pt]
    \yon\otimes v\ar[d, "\cong"']\ar[r, "{i\otimes (f,g)}"]&
\cofree{h^t_{A,A+B}}\otimes\cofree{h^t_{A+B,C}}\ar[d, "\then"]\\
    v\ar[r, "f"']&\cofree{h^t_{A,C}}
\end{tikzcd}
\hspace{.5in}
\begin{tikzcd}[column sep=35pt]
    \yon\otimes v\ar[d, "\cong"']\ar[r, "{j\otimes (f,g)}"]&
\cofree{h^t_{B,A+B}}\otimes\cofree{h^t_{A+B,C}}\ar[d, "\then"]\\
    v\ar[r, "g"']&\cofree{h^t_{B,C}}
\end{tikzcd}
\end{equation}

\begin{theorem}
    For any polynomial monad $(t,\eta,\mu)$, the $\catsharp$-category $\dk$ has coproducts.
\end{theorem}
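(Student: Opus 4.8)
The plan is to take the coproduct object to be the disjoint union $A+B$ of sets. This is already the coproduct in the underlying Kleisli category $\smset_t$, since the free functor $\smset\to\smset_t$ is a left adjoint and hence preserves coproducts; the injections are the Kleisli maps $\eta\circ\iota_A\colon A\to t\tri(A+B)$ and $\eta\circ\iota_B\colon B\to t\tri(A+B)$. Transporting these along \eqref{eqn.enriched} and the adjunction \eqref{eqn.cofree_adjunction} gives the required $i\colon\yon\to\cofree{h^t_{A,A+B}}$ and $j\colon\yon\to\cofree{h^t_{B,A+B}}$; at the level of polynomials $i$ is $A\yon\xrightarrow{\iota_A}(A+B)\yon\cong\yon\tri(A+B)\yon\xrightarrow{\eta\tri(A+B)\yon}t\tri(A+B)\yon$, and likewise for $j$.

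The engine of the argument is a single isomorphism in $\poly$: since $(A+B)\yon\cong A\yon+B\yon$ and the internal hom $[-,\,t\tri C\yon]$ carries coproducts in its first variable to products, precomposition with the inclusions $A\yon,B\yon\hookrightarrow(A+B)\yon$ exhibits
\[
h^t_{A+B,C}=[(A+B)\yon,\,t\tri C\yon]\;\cong\;[A\yon,\,t\tri C\yon]\times[B\yon,\,t\tri C\yon]=h^t_{A,C}\times h^t_{B,C},
\]
naturally in $C$. Now, given $v:\catsharp$ together with maps $f\colon v\to\cofree{h^t_{A,C}}$ and $g\colon v\to\cofree{h^t_{B,C}}$, transpose them across $U\dashv\cofree{-}$ to polynomial maps $\bar f\colon Uv\to h^t_{A,C}$ and $\bar g\colon Uv\to h^t_{B,C}$, form the pairing $\langle\bar f,\bar g\rangle\colon Uv\to h^t_{A,C}\times h^t_{B,C}\cong h^t_{A+B,C}$, and transpose back to obtain a candidate $(f,g)\colon v\to\cofree{h^t_{A+B,C}}$. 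Equivalently, one may note that $\cofree{-}$, being a right adjoint, carries the $\poly$-product above to a product in $\catsharp$, so that $\cofree{h^t_{A+B,C}}\cong\cofree{h^t_{A,C}}\times\cofree{h^t_{B,C}}$ and $(f,g)$ is literally a pairing into a product.

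What remains, and what I expect to be the main obstacle, is to check that this $(f,g)$ makes the two triangles of \eqref{eqn.coprod} commute, and that it is the unique such map. Recall that the $\catsharp$-enriched composition on $\dk$ was defined by applying the lax monoidal functor $\cofree{-}$ to the composition law on the polynomials $h^t_{A,B}$ from the enrichment of $\smset_t$ in $(\poly,\yon,\otimes)$ above, i.e.\ the zigzag through the evaluation map, the duoidal map, and $\mu$. Transporting the triangles of \eqref{eqn.coprod} across the adjunction $U\dashv\cofree{-}$ therefore reduces them to an identity of polynomial maps $Uv\to h^t_{A,C}$ (resp.\ $\to h^t_{B,C}$): namely that the $\poly$-composition $h^t_{A,A+B}\otimes h^t_{A+B,C}\to h^t_{A,C}$, precomposed with $\bar i\otimes(-)$, coincides with the first projection of the isomorphism $h^t_{A+B,C}\cong h^t_{A,C}\times h^t_{B,C}$ displayed above. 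Granting this identity, both triangles for our candidate and the uniqueness of $(f,g)$ follow at once from the universal property of the product $h^t_{A,C}\times h^t_{B,C}$ in $\poly$ together with the bijectivity of the transposition. The displayed identity is where the monad unit axiom $\mu\circ(\eta\tri t)=\id_t$ does its work: the $\eta$ built into $\bar i$ cancels the $t$ introduced by $\mu$ in the composition law, and duoidal and evaluation coherence then identify what remains with precomposition by $A\yon\hookrightarrow(A+B)\yon$, which is exactly that first projection. The genuinely fiddly part is tracking these coherences — the lax monoidality of $\cofree{-}$, the behavior of $U$ with respect to $\otimes$, and the duoidal bookkeeping in the composition law — rather than any of the underlying category theory.
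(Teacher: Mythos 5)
Your proposal is correct and follows essentially the same route as the paper: the isomorphism $h^t_{A+B,C}\cong h^t_{A,C}\times h^t_{B,C}$ induced by $[\,{-}\,,t\tri C\yon]$ turning coproducts into products, together with the adjunction \eqref{eqn.cofree_adjunction} identifying the pair $(f,g)$ with a single polynomial map into $h^t_{A+B,C}$. The only difference is that you spell out the verification of the triangles in \eqref{eqn.coprod} (via the unit law $\mu\circ(\eta\tri t)=\id$ and the duoidal/evaluation coherences), which the paper dismisses as following easily.
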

\begin{proof}
    For any $p,q,x:\poly$, the inclusions $p\to p+q\from q$ induce an isomorphism
    \[
    [p+q,x]\To{\cong}[p,x]\times [q,x]
    \]
    in $\poly$, so in particular we have
    \[h^t_{(A+B),C}\cong h^t_{A,C}\times h^t_{B,C}.\]
    By \eqref{eqn.cofree_adjunction}, maps $v\to\cofree{p}$ in $\catsharp$ are in bijection with maps $v\to p$ of polynomials. Thus we can identify a pair of $\catsharp$-maps $f\colon v\to\cofree{h^t_{A,C}}$ and $g\colon v\to\cofree{h^t_{B,C}}$ with a single $\poly$-map $(f,g)\colon v\to h^t_{A+B,C}$. The commutativity of the diagrams from \eqref{eqn.coprod} follows easily.
\end{proof}

\subsection{Proposed traced structure on $\dk$}\label{sec.dk_traced}

In this section we propose a trace structure on $\dk$ for certain polynomial monads $t$, which we call \emph{exceptional} because they are equipped with an element $\xi\colon 1\to t$ that acts like an exception: if any branches of a syntax tree throw an exception then so does the whole tree. 

Defining the trace map requires even more background on $\poly$ than the previous sections. Whereas the $\catsharp$-category structure on $\dk$, as defined in \cref{sec.dk_cat,sec.dk_coprod} was induced by a simpler, $\poly$-category structure on $\dk$, the application to rewriting and the trace structure both make use of the full $\catsharp$-enrichment.

Our first goal is to define exceptional monads.

\begin{lemma}\label{lemma.exceptional}
    For any cartesian polynomial monad $(t,\eta,\mu)$, the polynomial $t+1$ also carries the structure of a polynomial monad, and the coproduct inclusion $t\to t+1$ is a morphism of monads.
\end{lemma}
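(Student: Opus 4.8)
The statement has two parts: (1) $t+1$ carries a polynomial monad structure, and (2) the inclusion $t \to t+1$ is a monad morphism. Since $t$ is assumed cartesian, I have strong structural control over $t$: its unit $\eta\colon \yon \to t$ and multiplication $\mu\colon t\tri t\to t$ are cartesian natural transformations, and in particular $\eta$ picks out a single position of $t$ with a singleton direction-set. The idea is that the extra summand $1 = \yon^0$ in $t+1$ behaves like an absorbing ``exception'' element $\xi$: composing anything with $\xi$ (on either side) yields $\xi$.

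**Step 1: Define the unit.** Take $\eta'\colon \yon \to t+1$ to be $\eta$ followed by the coproduct inclusion $t \to t+1$. This is the only sensible choice if $t \to t+1$ is to be a monad morphism, and it is cartesian since $\eta$ is and coproduct inclusions of polynomials are cartesian (vertical–cartesian factorization: the inclusion has trivial position-map fibers).

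**Step 2: Define the multiplication.** I need $\mu'\colon (t+1)\tri(t+1) \to (t+1)$. Distribute $\tri$ over $+$: since $\tri$ preserves coproducts in its left argument and $1\tri p \cong 1$, we get
\[
(t+1)\tri(t+1) \;\cong\; \bigl(t\tri(t+1)\bigr) + 1 \;\cong\; (t\tri t) + (t\tri 1) + 1 \;\cong\; (t\tri t) + (t\tri 1) + 1.
\]
Here I should be careful: $t\tri 1$ is the polynomial $t$ evaluated, i.e.\ $t\tri\yon^0 \cong \yon^{0}$-indexed... actually $t\tri 1 = t(1)\cdot\yon^0$, a constant polynomial. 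On positions this whole thing is $t(1)\times(t(1)+1) + \text{(stuff)}$; rather than chase the iso explicitly, the clean description is: a position of $(t+1)\tri(t+1)$ is either a position of $t\tri t$ (an outer $t$-position plus, at each outer direction, an inner $t$-position), or it involves $\xi$ somewhere (the outer position is $\xi$, or some inner position over an outer $t$-direction is $\xi$). Define $\mu'$ to send genuine $t\tri t$ positions via $\mu$ (composed with $t\hookrightarrow t+1$), and to send everything involving a $\xi$ to $\xi\in(t+1)(1)$. On directions: $\mu$'s direction map handles the $t\tri t$ part; the $\xi$ part has empty direction-set so there is nothing to define. I'd present this as: $\mu' = [\,\mu\then(t\to t+1),\ \text{!}\then\xi\,]$ where $\text{!}$ is the unique map to $1$ from the ``exceptional'' summand, using the coproduct decomposition above.

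**Step 3: Verify the monad axioms.** The associativity square and the two unit triangles for $(t+1,\eta',\mu')$ must commute. I expect the verification to split by cases on whether a $\xi$ is present. On the ``no $\xi$ anywhere'' summand, every map in sight restricts to the corresponding map for $(t,\eta,\mu)$, so commutativity reduces to the monad axioms for $t$. On any summand containing a $\xi$, both legs of each diagram land in $\xi$ (because $\xi$ is absorbing under $\mu'$ by construction) and the direction-sets are empty, so the diagrams commute trivially. This case analysis is the routine but slightly fiddly heart of the argument.

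**Step 4: The inclusion is a monad morphism.** Let $\iota\colon t\to t+1$ be the coproduct inclusion. Compatibility with units is Step 1 by definition. Compatibility with multiplications asks that $(\iota\tri\iota)\then\mu' = \mu\then\iota$. But $\iota\tri\iota$ lands exactly in the ``no $\xi$'' summand $t\tri t \hookrightarrow (t+1)\tri(t+1)$, on which $\mu'$ was defined to be $\mu\then\iota$; so this commutes on the nose.

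**Main obstacle.** The only real subtlety is bookkeeping the coproduct/distributivity isomorphisms for $\tri$ over $+$ carefully enough to pin down the position- and direction-maps of $\mu'$ unambiguously — in particular making sure the ``$\xi$ is absorbing'' clause is forced on *every* summand where a $\xi$ occurs (outer position $\xi$, or an inner $\xi$ over a genuine outer $t$-direction), not just the obvious one. Once the decomposition $(t+1)\tri(t+1) \cong (t\tri t) + E$ (with $E$ the ``exceptional'' part, a constant polynomial, i.e.\ having empty direction-sets) is set up cleanly, everything else is either a reduction to the axioms for $t$ or a triviality on $E$. Note cartesianness of $t$ is used only to stay within the hypotheses and to know $\eta$ is well-behaved; the construction of $t+1$ itself doesn't seem to need it, which matches the phrasing of the subsequent Definition of \emph{exceptional} monad.
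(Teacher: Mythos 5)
Your route is genuinely different from the paper's: you build the unit and an exception-absorbing multiplication on $t+1$ by hand and verify the axioms by a case split, whereas the paper writes $t+1\cong(\yon+1)\tri t$ and produces a distributive law $t\tri(\yon+1)\to(\yon+1)\tri t$ commuting with the inclusions, so that both the monad structure and the monad-morphism claim fall out of the general theory of composite monads. Your construction does yield the same monad, but as written it has a real gap, located exactly at the point your closing remark dismisses: cartesianness of $\mu$ is not a bystander hypothesis, it is what makes your associativity case analysis true. Concretely, take a position of $(t+1)\tri(t+1)\tri(t+1)$ whose outer and middle layers are genuine $t$-positions $(I,J_\bullet)$ and whose only $\xi$ occurs in the innermost layer, at some address $(i,j)$ with $i:t[I]$, $j:t[J_i]$. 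Composing the middle and inner layers first absorbs that $\xi$, and the outer application of $\mu'$ then returns $\xi$. But composing the outer and middle layers first produces the genuine position $\mu(I,J_\bullet)$, and the innermost data is re-indexed along the direction map $\mu^\sharp\colon t[\mu(I,J_\bullet)]\to\sum_{i}t[J_i]$; if $(i,j)$ lies outside the image of $\mu^\sharp$, the offending $\xi$ is silently discarded and the second $\mu'$ returns a genuine $t$-position, so the two legs of the associativity square disagree. Cartesianness of $\mu$ (its direction maps are bijections, in particular surjective) is precisely what rules this out, and a similar remark applies to the left unit law via the direction fiber of $\eta$ (for the non-cartesian monad $t=1$, the polynomial $1+1$ carries no monad structure at all). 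So the sentence ``both legs of each diagram land in $\xi$'' must explicitly invoke cartesianness; stating that the construction ``doesn't seem to need it'' is incorrect.

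A smaller inaccuracy: the exceptional summand $E$ in $(t+1)\tri(t+1)\cong(t\tri t)+E$ is not a constant polynomial. A position with genuine outer $I$, some genuine inner $J_i$, and at least one inner $\xi$ has a nonempty direction-set, contributed by the genuine inner positions. This does not break the definition of $\mu'$---a polynomial map into the direction-less position $\xi$ requires no direction data---but the verification should appeal to the emptiness of the \emph{codomain} directions, not of $E$'s, together with the cartesianness point above. For comparison, the paper's proof meets the same issue in a different guise: its map $p\tri(\yon+1)\to p+1$ is natural only with respect to cartesian maps, and it is checked to be a distributive law only when $t$ is a cartesian monad; the advantage of that packaging is that associativity and unit laws for $t+1$ then come for free from Beck's theory rather than from a by-hand case analysis.
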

\begin{proof}
    It suffices to show that for any $t$ there is a distributive law $t\tri(\yon+1)\to(\yon+1)\tri t$ that commutes with the inclusion $\yon\to\yon+1$ on both sides.
    
    In any category with a terminal object, a coproduct inclusion $A\inj B$, i.e.\ an isomorphism $A+A'\cong B$ for some $A'$, induces a map
    \[
    B\To{\cong} A+A'\To{A+!} A+1
    \]
    such that $A\to B\to A+1$ is the coproduct inclusion. 

    It is easy to show that every cartesian monomorphism in $\poly$ is a coproduct inclusion. Moreover, for any polynomial $p$, the map $p\cong p\tri\yon\to p\tri(\yon+1)$ is a cartesian monomorphism because $\yon\to\yon+1$ is, and $\tri$ preserves monomorphisms and cartesian maps in both variables. Thus we have a map $p\tri(\yon+1)\to p+1\cong(\yon+1)\tri p$, natural with respect to cartesian maps $p\to p'$. One can check that when $t$ is a cartesian monad, this map is always a distributive law, completing the proof.
\end{proof}

\begin{definition}[Exceptional monad]\label{def.exceptional}
    Let $(t,\eta,\mu)$ be a cartesian polynomial monad. An \emph{exception} structure on $t$ is a retraction $\xi\colon t+1\to t$ of the monad inclusion $t\to t+1$ from Lemma \ref{lemma.exceptional}.
\end{definition}

By Lemma \ref{lemma.exceptional}, $t+1$ is an exceptional monad for any cartesian polynomial monad $t$. 

\begin{remark}
In the multiplication $\mu\colon t\tri t\to t$ of an exceptional monad, a position of the left-hand side consists of a position $I:t(1)$ and, for every direction $i:t[I]$, a position $J_i:t(1)$. If either $I$ or any of the the $J_i$ is the exceptional element $\xi$, then its image under $\mu$ must be the exceptional element: an exception anywhere causes an exception in the whole computation.

The notion of exceptional monad differs from that of \emph{monad with zero} \cite{wisnesky2011minimizing}
even though in each case the monad $t$ is equipped with a constant $1\to t$. For example, the $\List$ monad has a zero, namely the empty list, but it is not exceptional because a list of lists can contain the empty list without its concatenation being empty. 
\end{remark}

We next propose our trace map $\Tr{A}{B}{U}\colon\cofree{[(A+U)\yon,t\tri(B+U)\yon]}\to\cofree{[A\yon,t\tri B\yon]}$. In fact, it is more straightforward to define an \emph{iteration} map as in \cite{Selinger_2010} of the form
\begin{equation}\label{eqn.iter}
\iter^A_B\colon\cofree{[A\yon,t\tri(A+B)\yon]}\to\cofree{[A\yon,t\tri B\yon]}
\end{equation}
at which point we can define $\Tr{A}{B}{U}$ to be given by the composite
\begin{align*}
    \cofree{[(A+U)\yon,t\tri(B+U)\yon]}&\to
    \cofree{[(A+U)\yon,t\tri(A+U+B)\yon]}\\&\To{\iter^A_B}
    \cofree{[(A+U)\yon,t\tri B\yon]}\\&\to
    \cofree{[A\yon,t\tri B\yon]}.
\end{align*}

To get there, we need to be more explicit about the cofree comonad construction and the free monad construction on a polynomial $p$. 

The free monad $\free{p}$ on a pointed polynomial $\yon\to p$ is constructed in two steps. For any finitary pointed polynomial $q$---one for which each $q[J]$ is a finite set---the free monad on $q$ can be constructed in one step, namely as the colimit:
\[
	\free{q}\coloneqq\colim(\cdots\from q\tripow{n+1}\From{g_n}q\tripow{n}\from\cdots\from q)
\]
where the maps $g_n\colon q\tripow{n}\to q\tripow{n+1}$ are defined inductively by $p\tripow{n}\cong\yon\tri p\tripow{n}\to p\tri p\tripow{n}=p\tripow{n+1}$.
An arbitrary polynomial $p$ can be written as the filtered limit of its vertical projections $p\to p^{j}$ onto finitary polynomials: that is, for each sum-component $I: p(1)$, just replace $p[I]$ by an arbitrary finite subset of it, and take the limit of all such things under component-wise projection. That limit is isomorphic to $p$, and we write $p\cong\lim_{j: J_p}p^{(j)}$. By construction, each of these $p^{(j)}$ is finitary, so let $\free{p^{(j)}}\coloneqq\free{p^{(j)}}$ denote the free monad on it, constructed as above. Then finally we construct the free monad $\free{p}$ on $p$ as their filtered limit:
\[
	\free{p}\coloneqq\lim_{j: J_p}\free{p^{(j)}}.
\]

The cofree comonoid $\cofree{p}$ on $p:\poly$ is constructed in just one step. It is carried by the limit
\[
\cofree{p}\coloneqq\lim(\cdots\to p^{(n+1)}\To{f^{(n)}} p^{(n)}\to\cdots\to p^{(1)}\To{f^{(0)}} p^{(0)})
\]
where the $p^{(k)}$ are defined inductively as follows:
\begin{align*}
	p^{(0)}&\coloneqq\yon&p^{(k+1)}&\coloneqq (p\tri p^{(k)})\times\yon\\
\intertext{and the maps $f^{(k)}\colon p^{(k+1)}\to p^{(k)}$ are defined inductively as follows:}
p^{(0)}=p\times\yon&\To{f^{(0)}\coloneqq\tn{proj}}\yon=p^{(0)}&p^{(k+1)}=(p\tri p^{(k+1)})\times\yon&\To{f^{(k+1)}\coloneqq(p\tri f^{(k)})\times\yon}(p\tri p^{(k)})\times\yon=p^{(k+1)}
\end{align*}

\begin{proposition}\label{prop.free_cart_mon_cofree}
Let $\yon\to p$ be a cartesian map of polynomials, and let $\free{p}$ be the free monad on it; let $\cofree{p}$ be the cofree comonad on $p$. There is a natural cartesian monomorphism 
\[
\free{p}\to\cofree{p}\tri(\yon+1).
\]
\end{proposition}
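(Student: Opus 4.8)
The plan is to exploit the two explicit limit/colimit constructions already in hand and build the map level by level. Both $\free{p}$ and $\cofree{p}$ are presented as limits indexed by $\nn$ (after first reducing $\free{p}$ to the finitary case, where it is a colimit of the $p\tripow{n}$; note that for a finitary cartesian $\yon\to p$ the colimit computing $\free{p}$ and the limit computing $\cofree{p}$ interact well, and the general $p$ is handled by the filtered limit $p\cong\lim_j p^{(j)}$, so it suffices to treat each finitary $p^{(j)}$ and then pass to the limit over $j$, using that cartesian monos are closed under limits). So I would first reduce to $p$ finitary. Then I would observe that $\cofree{p}\tri(\yon+1)$ also carries a compatible tower: since $\cofree{p}=\lim_n p^{(n)}$ with $p^{(0)}=\yon$, $p^{(k+1)}=(p\tri p^{(k)})\times\yon$, tensoring on the right with $\yon+1$ and using that $\tri$ preserves connected limits in its left argument gives $\cofree{p}\tri(\yon+1)=\lim_n\big(p^{(n)}\tri(\yon+1)\big)$.

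Next I would construct, by induction on $n$, cartesian monomorphisms $\alpha_n\colon p\tripow{n}\to p^{(n)}\tri(\yon+1)$ that are compatible with the two towers — i.e. commuting with the structure maps $g_n$ on the source side and $f^{(n)}\tri(\yon+1)$ on the target side. The base case $n=0$ is $\yon\to\yon\tri(\yon+1)=\yon+1$, the coproduct inclusion, which is a cartesian mono because $\yon\to\yon+1$ is. For the inductive step, $p\tripow{n+1}=p\tri p\tripow{n}$, and I apply $p\tri(-)$ to $\alpha_n$ to get $p\tri p\tripow{n}\to p\tri(p^{(n)}\tri(\yon+1))\cong (p\tri p^{(n)})\tri(\yon+1)$; since $\tri$ preserves cartesian maps and monomorphisms in both variables (the fact used in Lemma \ref{lemma.exceptional}), this is a cartesian mono. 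Then I compose with the evident cartesian mono $(p\tri p^{(n)})\tri(\yon+1)\to\big((p\tri p^{(n)})\times\yon\big)\tri(\yon+1)=p^{(n+1)}\tri(\yon+1)$ coming from the product projection's section — more carefully, $q\to q\times\yon$ given by $(\id,!)$ — which again is a cartesian mono. Checking the compatibility squares with $g_n$ and $f^{(n)}\tri(\yon+1)$ is a diagram chase using naturality of the monoidal and duoidal coherences. Taking the colimit over $n$ on the source and the limit over $n$ on the target (and using the universal property to get the comparison out of the colimit), together with the fact that a filtered colimit of cartesian monos into a fixed limit-tower induces a cartesian mono on the colimit, yields $\free{p}\to\cofree{p}\tri(\yon+1)$. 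Naturality in $p$ (with respect to cartesian maps) is inherited from the naturality of all the constructions used.

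The main obstacle I anticipate is not any single step but keeping the bookkeeping honest at the interface between the colimit defining $\free{p}$ and the limit defining $\cofree{p}$: one must check that the $\alpha_n$ genuinely form a map of diagrams of the right variance, so that passing to (co)limits is legitimate, and that the resulting map is still a monomorphism and still cartesian — monos and cartesian maps are each detected pointwise in $\poly$ (on positions and on directions), so I would ultimately verify both properties by the explicit combinatorial description of $\free{p}$ (positions are finite $p$-trees, directions are their leaves) versus $\cofree{p}\tri(\yon+1)$ (positions are depth-indexed $p$-trees with a possible $\bot$, directions the leaves), where the map sends a finite tree to itself viewed as a possibly-infinite tree truncated with $\bot$'s, injectively and with a bijection on leaves. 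That combinatorial picture is really the cleanest way to see why the statement is true, and I would use it as a sanity check on — or even a replacement for — the abstract tower argument.
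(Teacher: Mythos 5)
Your overall strategy is the same as the paper's: reduce to finitary $p$ using that $\cofree{-}$ and $-\tri(\yon+1)$ preserve the relevant limits and that cartesian monomorphisms are stable under limits, then build compatible cartesian monomorphisms $p\tripow{n}\to p^{(n)}\tri(\yon+1)$ by induction, with base case $\yon\to\yon+1$ and inductive step driven by applying $p\tri(-)$ to the previous stage. The bookkeeping concerns you raise at the colimit/limit interface are real, but the paper glosses them at essentially the same level, so they are not the issue.

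The genuine problem is in your inductive step. You obtain the second half of the map into $p^{(n+1)}\tri(\yon+1)\cong\bigl(p\tri p^{(n)}\tri(\yon+1)\bigr)\times(\yon+1)$ from ``the product projection's section $q\to q\times\yon$ given by $(\id,!)$.'' No such map exists in general: a morphism $q\to\yon$ in $\poly$ amounts to a choice of direction at every position of $q$, i.e.\ a section of the direction bundle, and this fails whenever $q$ has a directionless position --- already for $p=\yon+1$ (pointed by the cartesian inclusion of $\yon$), the polynomial $q=p\tri p^{(n)}$ has positions with empty direction set. Worse, even when such a section exists, the induced map $q\to q\times\yon$ is not cartesian: at an image position the direction set is $q[x]+1$, so the backward map cannot be a bijection, and cartesianness would be destroyed precisely at the step where you need it. The correct move --- and the reason the codomain is $\cofree{p}\tri(\yon+1)$ rather than $\cofree{p}$ --- is to take the second component to be $p\tripow{n+1}\to 1\to\yon+1$, landing on the directionless summand of $\yon+1$; since that summand contributes no directions, pairing it with the cartesian monomorphism $p\tri p\tripow{n}\To{p\tri\varphi_n}p\tri p^{(n)}\to p\tri p^{(n)}\tri(\yon+1)$ again yields a cartesian monomorphism. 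This is exactly what your closing combinatorial picture (finite trees truncated by $\bot$'s) encodes, so the fix is to make the formal inductive step match that intuition rather than routing through $\yon$.
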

\begin{proof}
    For any $q:\poly$, the functor $-\tri q$ commutes with limits. Since $\cofree{-}$ is a right adjoint, it also commutes with limits. Since the limit of cartesian monomorphisms is a cartesian monomorphism, we may assume $p$ is finitary and it suffices to produce compatible cartesian monomorphism $\varphi_i\colon p\tripow{i}\to p^{(i)}\tri(\yon+1)$ for each $i:\nn$. We take $\varphi_0$ to be the cartesian monomorphism $\yon\to\yon+1$. Suppose given $\varphi_i$. To define $p\tripow{i+1}\to(p\tri p^{(i)})\yon\tri(\yon+1)$, it suffices to give two maps, $p\tripow{i+1}\to p\tri p^{(i)}\tri(\yon+1)$ and $p\tripow{i+1}\to\yon+1$. For the latter, use $p_{(i+1)}\to 1\to\yon+1$. It remains to give a cartesian monomorphism $p\tripow{i+1}\to p\tri p^{(i)}\tri(\yon+1)$, which we obtain by induction $p\tri p\tripow{i}\To{p\tri\varphi_i}p\tri p^{(i)}\to p\tri p^{(i)}\tri(\yon+1)$.
\end{proof}

\begin{lemma}\label{lemma.cart_mon_coprod_incl}
    In $\poly$, any cartesian monomorphism $p\to q$ is a coproduct inclusion.
\end{lemma}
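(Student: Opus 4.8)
The plan is to reduce everything to the explicit description of morphisms in $\poly$: a map $\varphi\colon p\to q$ is a function $\varphi_1\colon p(1)\to q(1)$ on positions together with, for each position $i\colon p(1)$, a function $\varphi^\sharp_i\colon q[\varphi_1(i)]\to p[i]$ on directions (note the reversed variance), and $\varphi$ is cartesian precisely when each $\varphi^\sharp_i$ is a bijection. First I would pin down which cartesian maps are monomorphisms: using the standard characterization of monos in $\poly$ (injective on positions, surjective on directions; see \cite{poly}), a cartesian $\varphi$ is a monomorphism exactly when $\varphi_1$ is injective. If one prefers to argue directly, one checks that whenever $\varphi_1(i)=\varphi_1(i')$ with $i\ne i'$, the two elements $(i,\mathrm{id}_{p[i]})$ and $(i',\theta)$ of the set $p(p[i])$, where $\theta\coloneqq\varphi^\sharp_i\circ(\varphi^\sharp_{i'})^{-1}\colon p[i']\to p[i]$ is the bijection induced by the two direction bijections over the common position, have the same image $(\varphi_1(i),\varphi^\sharp_i)$ under $\varphi$, so $\varphi$ is not mono.

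Given a cartesian monomorphism $\varphi\colon p\to q$, I would then build the complementary summand explicitly. Let $J\coloneqq q(1)\setminus\varphi_1\big(p(1)\big)$ be the set of positions of $q$ not hit by $\varphi_1$, and set $p'\coloneqq\sum_{j\colon J}\yon^{q[j]}$. There is a canonical map $\psi\colon p+p'\to q$: on positions it is $\varphi_1$ on the $p$-summand and the inclusion $J\hookrightarrow q(1)$ on the $p'$-summand, which together form a bijection since $\varphi_1$ is injective with image the complement of $J$; on directions it is the bijection $\varphi^\sharp_i\colon q[\varphi_1(i)]\to p[i]$ supplied by $\varphi$ over a position $i$ coming from $p$, and the identity $q[j]\to q[j]$ over a position $j$ coming from $p'$. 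Since $\psi$ is a bijection on positions and a bijection on each fiber of directions, it is an isomorphism. Finally, tracing through the composition rule for polynomial maps (the direction map of a composite is the composite of the direction maps, in reverse order) shows that $\psi$ carries the coproduct inclusion $\iota_p\colon p\hookrightarrow p+p'$ to $\varphi$; hence $\varphi$ is a coproduct inclusion, with complement $p'$.

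I do not expect a serious obstacle here: the content is really just the infinitary extensivity of $\smset$ applied fiberwise, since a cartesian map $p\to q$ is the same datum as a pullback square relating the associated direction bundles $\sum_i p[i]\to p(1)$ and $\sum_j q[j]\to q(1)$ along $\varphi_1$, and restricting a bundle along a coproduct injection $B_0\hookrightarrow B_0\sqcup B_1$ exhibits it as a direct summand. The only steps requiring care are the characterization of cartesian monomorphisms (this is the one place where "monomorphism", rather than merely "cartesian", gets used — the codiagonal $\yon+\yon\to\yon$ is cartesian but not a coproduct inclusion) and keeping the variance of the direction maps straight when verifying $\psi\circ\iota_p=\varphi$.
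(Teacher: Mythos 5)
Your proof is correct and takes essentially the same route as the paper: get injectivity of $\varphi$ on positions from monomorphy, use cartesianness to make the direction maps bijections, and exhibit $q\cong p+\sum_{j:J'}\yon^{q[j]}$ where $J'$ is the complement of the image of positions. The only difference is cosmetic: the paper deduces injectivity on positions in one line from the fact that $p\mapsto p(1)$ is a right adjoint (to $A\mapsto A\yon$) and hence preserves monomorphisms, whereas you argue it directly via elements/representables; your explicit check of the isomorphism and of the triangle identifying $\varphi$ with the coproduct inclusion just spells out what the paper leaves implicit.
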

\begin{proof}
    Suppose $\varphi\colon p\to q$ is a cartesian monomorphism. Since $p\mapsto p(1)$ is a right adjoint, $\varphi(1)\colon p(1)\to q(1)$ is an injection; let $J'\coloneqq q(1)-p(1)$ be its compliment. Then we have the desired isomorphism
    \[
q\cong p+\sum_{j:J'}\yon^{q[j]}.
    \]
\end{proof}

\begin{proposition}\label{prop.cofree_free}
For any pointed polynomial $\yon\to p$ there is a map
$\cofree{p}\tri(\yon+1)\to\free{p}+1$.
\end{proposition}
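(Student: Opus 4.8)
The plan is to first dispatch the case where the point $\yon\to p$ is cartesian, in which the two results immediately preceding do almost all the work, and then to extend to an arbitrary point by a more hands-on construction together with a limit argument. In the cartesian case: by \cref{prop.free_cart_mon_cofree} there is a cartesian monomorphism $\free{p}\to\cofree{p}\tri(\yon+1)$, and by \cref{lemma.cart_mon_coprod_incl} every cartesian monomorphism in $\poly$ is a coproduct inclusion, so this presents an isomorphism $\cofree{p}\tri(\yon+1)\cong\free{p}+R$ for some complementary polynomial $R:\poly$. Now $1=\yon^0$ is terminal in $\poly$ (a map $q\to\yon^0$ is the unique function $q(1)\to 1$ together with, for each position $i$, the unique function $\yon^0[*]=\varnothing\to q[i]$), so there is a unique $!_R\colon R\to 1$, and the composite
\[
\cofree{p}\tri(\yon+1)\To{\cong}\free{p}+R\To{\id_{\free{p}}+\,!_R}\free{p}+1
\]
is the map we want; it is even natural in $p$ along cartesian maps, since the monomorphism of \cref{prop.free_cart_mon_cofree} is. This already covers the situation relevant to exceptional monads, where every polynomial in play is cartesian.

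For the stated generality I would instead build the map concretely. Unwinding the constructions recalled above, a position of $\cofree{p}\tri(\yon+1)$ is an infinite $p$-behaviour tree $\tau$ equipped with a $2$-colouring of its nodes: each node is either \emph{terminal} (the $\yon$-summand of $\yon+1$, contributing one direction) or \emph{internal} (the $1$-summand, contributing none), so the directions of such a position are precisely its terminal nodes. From this datum one reads off a finite $p$-tree by retaining every node all of whose proper ancestors are internal, with the retained terminal nodes as leaves and the retained internal ones as $p$-branching nodes; if that pruned tree has bounded depth it determines an element $\sigma$ of $\free{p}$ whose direction set $\free{p}[\sigma]$ — its set of leaves — is a subset of the terminal nodes of $\tau$. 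We send the position to $\sigma$, with the backward map on directions being this inclusion of retained terminal nodes into all terminal nodes. If instead some branch of $\tau$ never meets a terminal node, or terminal nodes are met only at unbounded depth, we send the position to the adjoined point of $\free{p}+1$, which carries no directions, so no further data is needed. Morally this is ``run the computation encoded by $\tau$; if it fails to halt on every branch, throw'' — exactly the behaviour wanted when this map is later used to build an iteration operator.

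To make this rigorous I would treat finitary $p$ first, where the one genuinely non-formal ingredient is König's lemma: when every branch of $\tau$ reaches a terminal node the pruned tree is finitely branching and well-founded, hence finite, hence of bounded depth, so it truly determines an element of $\free{p}$; one then checks by hand that the assignment, together with the inclusion on directions, is a map of polynomials. For general $p$, write $p\cong\lim_{j:J_p}p^{(j)}$ as the filtered limit of its finitary vertical approximations, each inheriting the point; since $\cofree{-}$ is a right adjoint (\cref{eqn.cofree_adjunction}) and $-\tri(\yon+1)$ preserves limits (as used in the proof of \cref{prop.free_cart_mon_cofree}), $\cofree{p}\tri(\yon+1)\cong\lim_j\bigl(\cofree{p^{(j)}}\tri(\yon+1)\bigr)$, and applying the finitary case to each $p^{(j)}$, compatibly in $j$, yields a map into $\lim_j\bigl(\free{p^{(j)}}+1\bigr)$. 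The crux is then to identify this limit with $\free{p}+1$: although $(-)+1$ is not a right adjoint and does not preserve limits in general, the transition maps of this particular diagram preserve the adjoined point and restrict to the $\free{p^{(j)}}$-summands, so a compatible family is either constantly the adjoined point or lies entirely in the $\free{p^{(j)}}$'s, giving $\lim_j\bigl(\free{p^{(j)}}+1\bigr)\cong\bigl(\lim_j\free{p^{(j)}}\bigr)+1=\free{p}+1$. I expect that identification — together with being careful that for a non-cartesian point the maps $p\tripow{n}\to p\tripow{n+1}$ assembling $\free{p}$ need not be monic, so membership in $\free{p}$ must be read via the relevant colimit rather than literally — to be the main obstacle; the remainder is bookkeeping.
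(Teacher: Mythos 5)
Your first paragraph is precisely the paper's proof: invoke \cref{prop.free_cart_mon_cofree} to get a cartesian monomorphism $\free{p}\to\cofree{p}\tri(\yon+1)$, invoke \cref{lemma.cart_mon_coprod_incl} to see it is a coproduct inclusion $\cofree{p}\tri(\yon+1)\cong\free{p}+R$, and collapse the complement to the terminal object to obtain $\free{p}+R\to\free{p}+1$. The rest of your proposal addresses a wrinkle the paper silently passes over: the proposition is stated for an arbitrary pointed polynomial, while \cref{prop.free_cart_mon_cofree}, and hence the paper's proof, assumes the point $\yon\to p$ is cartesian. In the only place the result is used---building the $\iter$ map for an exceptional monad $t$, pointed by its unit, which is cartesian since $t$ is a cartesian monad---that hypothesis holds, so the paper's argument suffices for its purposes, and your cartesian-case argument already fully matches it. Your hands-on extension to non-cartesian points goes beyond the paper and is plausible in outline (prune the $2$-coloured behaviour tree, throw when some branch never halts, reduce to finitary $p$ via the filtered-limit presentation), but it remains a sketch with real gaps: the identification $\lim_j\bigl(\free{p^{(j)}}+1\bigr)\cong\free{p}+1$ needs an argument on directions and not only on positions (your ``compatible family is either the adjoined point or lies in the $\free{p^{(j)}}$'s'' is a positions-only argument); for a non-cartesian point the colimit defining $\free{p}$ makes identifications that your pruning description does not yet track; and beware nodes whose $p$-position has empty direction set, which are leaves of the pruned tree without being terminal nodes. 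None of this affects the correctness of your proof of the statement in the form the paper actually proves and uses it.
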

\begin{proof}
    By Proposition \ref{prop.free_cart_mon_cofree} we have a cartesian monomorphism $\free{p}\to\cofree{p}\tri\yon+1$, which is a coproduct inclusion by Lemma \ref{lemma.cart_mon_coprod_incl}. For any coproduct $p\to q$ inclusion in a category with terminal object, we have a map $q\cong p+p'\to p+1$, completing the proof.
\end{proof}

Let's return to our goal of producing an $\iter$ map as in \eqref{eqn.iter}. By \eqref{eqn.cofree_adjunction}, it suffices to define a polynomial map of the form $\cofree{[A\yon,t\tri(A+B)\yon]}\to[A\yon,t\tri B\yon]$, or equivalently one of the form $A\yon\otimes\cofree{[A\yon,t\tri(A+B)\yon]}\to t\tri B\yon$. Given the exceptional structure $\xi\colon t+1\to t$ on $t$ and the fact that any monad $t$ carries an algebra structure $\free{t}\to t$, it suffices by Proposition \ref{prop.cofree_free} to find a map $A\yon\otimes\cofree{[A\yon,t\tri(A+B)\yon]}\to \cofree{t}\tri (\yon+1)\tri B\yon$.

We produce the desired map again by induction. The right-hand side is the limit
\[
\cofree{t}\tri (\yon+1)\tri B\yon\cong\lim(t^{(i)}\tri(\yon+1)\tri B\yon).
\]
When $i=0$, we use $A\yon\otimes\cofree{[A\yon,t\tri(A+B)\yon]}\to 1\to t^{(0)}\tri (\yon+1)\tri B\yon$.
 Suppose given a map $\varphi_i\colon A\yon\otimes\cofree{[A\yon,t\tri(A+B)\yon]}\to t^{(i)}\tri(\yon+1)\tri B\yon$. Define
\begin{align*}
    A\yon\otimes\cofree{[A\yon,t\tri(A+B)\yon]}&\to
\yon\times(A\yon\otimes([A\yon,t\tri(A+B)\yon]\tri\cofree{[A\yon,t\tri(A+B)\yon]})\\&\to
\yon\times(t\tri(A+B)\yon\tri\cofree{[A\yon,t\tri(A+B)\yon]})\\&\to
\yon\times(t\tri ((A\yon\tri\cofree{[A\yon,t\tri(A+B)\yon]})+B\yon))\\&\To{\cong}
\yon\times(t\tri ((A\yon\otimes\cofree{[A\yon,t\tri(A+B)\yon]})+B\yon))\\&\to
\yon\times(t\tri((t^{(i)}\tri(\yon+1)\tri B\yon)+B\yon))\\&\to
\yon\times(t\tri t^{(i)}\tri(\yon+1)\tri B\yon)\\&\to
(\yon\times(t\tri t^{(i)}))\tri(\yon+1)\tri B\yon)\\&\To{\cong}
t^{(i+1)}\tri(\yon+1)\tri B\yon
\end{align*}

We have now defined the purported $\iter$ map and hence trace map, as explained above.

\begin{conjecture}
The purported trace on $\dk$ defined above satisfies the axioms of a traced monoidal category.
\end{conjecture}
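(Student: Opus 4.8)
We outline the strategy we believe will establish the conjecture; we have not yet carried it through, and we indicate where the real work lies.

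The plan is to reduce the traced monoidal axioms to the \emph{Conway iteration identities} and to verify those by induction up the limit tower that defines $\cofree{t}$. Since $\dk$ is cocartesian monoidal (\cref{sec.dk_coprod}), a trace on it is equivalent data to a uniform iteration (Conway) operator; this is the cocartesian specialization of Hasegawa's correspondence and of the Bloom--\'Esik theory of iteration theories (see also the discussion of iteration in \cite{Selinger_2010} and the traced-monoidal axioms in \cite{joyal1996traced}). Concretely, it suffices to show that the map $\iter^A_B$ of \eqref{eqn.iter} — which we have already used to define $\Tr{A}{B}{U}$ — satisfies (i) the fixed-point / unfolding identity, relating $\iter^A_B(f)$ to $f$ post-composed with the copairing $(\iter^A_B(f),\id_B)$; (ii) parameter naturality with respect to Kleisli maps on the $A$- and $B$-ports; (iii) the Beki\'c / pairing identity, expressing iteration over a coproduct $A+A'$ as nested iteration; and (iv) uniformity (dinaturality): a Kleisli map $h\colon A\to A'$ intertwining two loop bodies also intertwines their iterates. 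A separate, essentially formal step — following \cite{Selinger_2010} — then shows that the composite formula defining $\Tr{A}{B}{U}$ from $\iter^A_B$ transports these identities into the full package of trace axioms (naturality, vanishing I and II, superposing, yanking).

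Each such identity is an equality of morphisms in $\catsharp$, i.e.\ of cofunctors; by the universal property \eqref{eqn.cofree_adjunction} of $\cofree{-}$, and the fact that $\iter$ is built stagewise along the tower $t^{(i)}\tri(\yon+1)\tri B\yon$ whose limit is $\cofree{t}\tri(\yon+1)\tri B\yon$, it is enough to prove each identity after projecting to every finite stage $i$. At stage $i$ the iteration map has an elementary operational reading: given an input in $A$ and a behaviour tree for a loop body $A\yon\to t\tri(A+B)\yon$, run the loop, return its output (wrapped by $\mu$) if it reaches $B$ within $i$ steps, and return the exception $\xi$ otherwise. Identities (i)--(iv) are visibly true of this ``bounded-run-or-exception'' semantics, by induction on $i$, using crucially that $\mu$ sends any multiplication position containing $\xi$ to $\xi$ — the property of \cref{def.exceptional} emphasized in the Remark above.

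The main obstacle I anticipate is checking that the two apparently non-canonical steps in the construction of $\iter$ are coherent. The map $\cofree{p}\tri(\yon+1)\to\free{p}+1$ of \cref{prop.cofree_free} is obtained by choosing the complement in the coproduct decomposition provided by \cref{lemma.cart_mon_coprod_incl} and collapsing it to $1$; a priori this depends on the decomposition, yet the identities that move the feedback object around (uniformity, and sliding for $\Tr$) demand compatibility of these collapses with re-indexing. I would resolve this by showing the decomposition is canonical — the complement is precisely the set of branches that have not yet terminated, and this is preserved on the nose by cartesian maps — so that \cref{prop.free_cart_mon_cofree,prop.cofree_free} are natural in $p$ along cartesian maps; dinaturality then follows because the relevant loop bodies differ by cartesian maps. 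A secondary obstacle is the re-injection $t\tri(B+U)\yon\to t\tri(A+U+B)\yon$ used to pass from $\Tr$ to $\iter$, through which one must verify the two vanishing axioms (iterating over $U=\varnothing$ is trivial; iterating over $U\otimes V$ can be done one coordinate at a time) with careful bookkeeping of how $\xi$ propagates through nested loops. Should a direct syntactic verification prove unwieldy, an alternative is to use the coalgebraic description of $\dk$'s hom-categories sketched after \cref{fig:behaviormorph} and argue coinductively: two maps into a cofree comonoid agree iff they agree on all finite input sequences, which reduces every axiom to the operational statement above.
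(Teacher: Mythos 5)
There is nothing in the paper to compare your argument against: the statement you were asked to prove is explicitly left open there (the text immediately following it reads ``While this conjecture has not been proven\dots''), and your submission does not close that gap. What you have written is a research plan, and you say so yourself (``we have not yet carried it through''). The two steps that carry all the weight are deferred: the verification of the Conway/iteration identities for $\iter^A_B$ of \eqref{eqn.iter}, and the coherence of the non-canonical collapse in \cref{prop.cofree_free}, which you correctly identify as an obstacle but only sketch how you ``would resolve.'' A proposal that names the hard points is useful, but it cannot be credited as a proof of the conjecture.

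Beyond incompleteness, one step of the plan is genuinely questionable as stated. The reduction of the traced-monoidal axioms to a uniform iteration operator (Hasegawa, Bloom--\'Esik) is a theorem about \emph{ordinary} cocartesian categories, i.e.\ at the level of hom-\emph{sets}. Here the whole point of the construction is that \eqref{eqn.tracemap} must be a morphism in $\catsharp$, and the axioms are equations between cofunctors: they constrain not only what $\Tr{A}{B}{U}(f)$ does to positions (behaviour trees) but also how it transports morphisms backwards, i.e.\ how the traced system's state updates are reflected into updates of $f$. Your stagewise induction along the tower $t^{(i)}\tri(\yon+1)\tri B\yon$ and your coinductive fallback (``agree on all finite input sequences'') only address the position/object level; they would establish at best that the \emph{underlying} $\smset$-enriched category carries a trace, which for $t=\maybe$-like monads is essentially classical. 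To prove the conjecture as the paper states it you would need either an enriched analogue of the trace--iteration correspondence in $(\catsharp,\yon,\otimes)$, or a direct verification of the axioms as cofunctor equations, including the backward-on-morphisms data; neither is supplied, and neither is routine. So the gap is real: the plan is consonant with how the paper sets up the trace (via $\iter$, following \cite{Selinger_2010}, using the exception structure of Definition \ref{def.exceptional}), but the enriched layer---which is exactly what distinguishes this conjecture from known results---is not engaged.
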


While this conjecture has not been proven, it has influenced the development of a working implementation in the open source  \href{https://github.com/AlgebraicJulia/AlgebraicRewriting.jl}{AlgebraicRewriting.jl}. This is the subject of \cref{chap.application}.

\section{Application to rewriting and agent-based modeling}\label{chap.application}

\subsection{Attributed C-Sets}

Our case study uses the AlgebraicJulia ecosystem \cite{algjulia} due to its support for wiring diagram manipulation (Catlab.jl) as well as its graph rewriting library, AlgebraicRewriting.jl \cite{rewrite}. The core data structure of AlgebraicJulia is the ACSet, i.e.\ attributed $\cat{C}$-Set for some finitely-presented category $\cat{C}$. ACSets offer a category-theoretic model of databases which extends $\cat{C}$-Sets (i.e.\ copresheaves) to include noncombinatorial data \cite{Patterson2022categoricaldata}. The database schema is given by a profunctor, i.e.\ a functor $S:|S|\to 2$, which distinguishes objects in $|S|$ as representing either tables or attribute types. Given an assignment $K: S_1 \to \smset$ which provides concrete Julia types for attributes, the category $\acset^S_K$ is bicomplete and a topos, and thus is an appropriate setting for applying graph rewriting rules. 

Consider an ACSet $X$, equipped with a distinguished `focus', i.e.\ morphism $A \to X$. We will soon see applications where it makes sense to think of $A$ as the \emph{shape} of a particular agent in the state of the world $X$, where the agent is picked out by the chosen morphism. Note that considering ACSets without any agent is tantamount to picking the agent shape to be an empty ACSet, 0, which is the initial object of $\acset^S_K$.

For example, consider the task of modeling wolves, sheep, and grass distributed on a directed graph, where the wolves and sheep are facing particular directions and have some integer number of energy units. Grass grows on vertices and has an integer number of days until it is grown. The schema in Figure~\ref{fig:lv} shows one way to model this.

\begin{figure}[!h]

\begin{minipage}{.4\textwidth}

\[\begin{tikzcd}
	& \textit{Dir} \\
	& E \\
	\textit{Wolf} & V & \textit{Sheep} \\
	\\
	& {\nn}
	\arrow["\mathit{tgt}", curve={height=-6pt}, from=2-2, to=3-2]
	\arrow["\mathit{src}"', curve={height=6pt}, from=2-2, to=3-2]
	\arrow["{w_{\mathit{eng}}}"', dashed, from=3-1, to=5-2]
	\arrow["{w_{\mathit{pos}}}"', from=3-1, to=3-2]
	\arrow["{s_{\mathit{pos}}}", from=3-3, to=3-2]
	\arrow["{s_{\mathit{eng}}}", dashed, from=3-3, to=5-2]
	\arrow["{\footnotesize \textit{grass}}"{description}, dashed, from=3-2, to=5-2]
	\arrow["{w_{\mathit{dir}}}", dashed, from=3-1, to=1-2]
	\arrow["{s_{\mathit{dir}}}"',dashed, from=3-3, to=1-2]
	\arrow["{\footnotesize dir}"{description}, dashed, from=2-2, to=1-2]
\end{tikzcd}\]
\end{minipage}%
\begin{minipage}{.6\textwidth}
    \includegraphics[width=\textwidth, right]{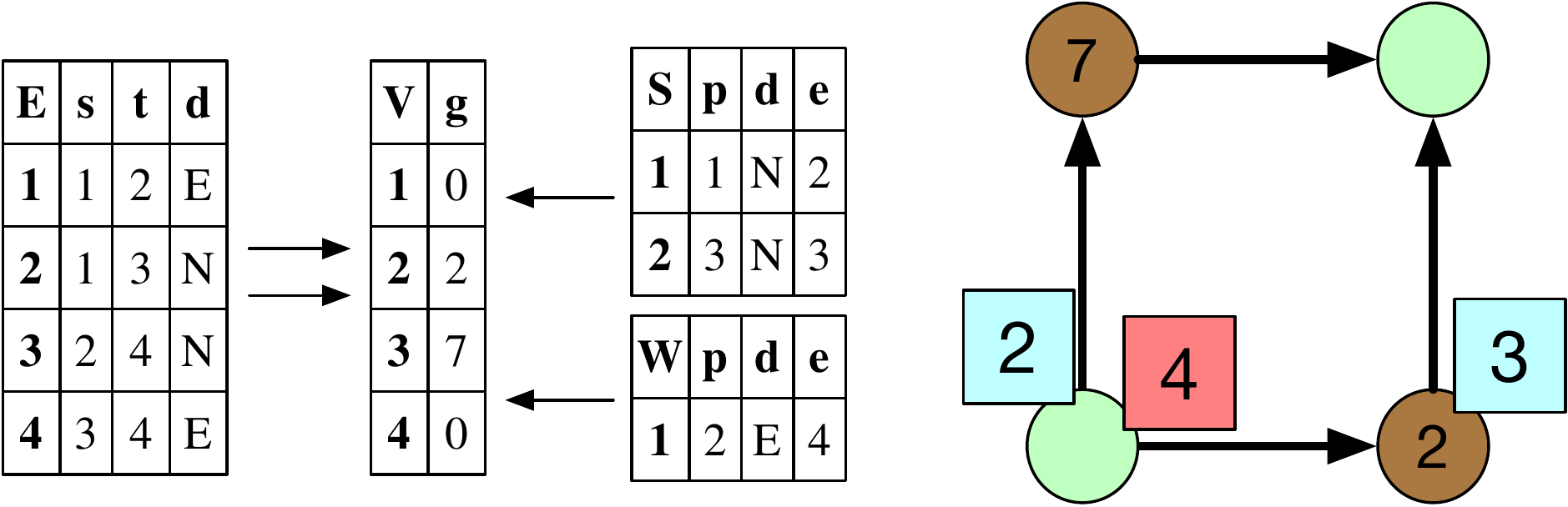}
\end{minipage}

    \caption{\textbf{Left} An ACSet schema with six objects and morphisms as well as two attribute types and six attributes (dotted edges). \textbf{Center} An example ACSet on this schema \textbf{Right} The example, informally visualized with energy as integers, sheep as blue boxes, and wolves as red boxes. If we wished to distinguish the wolf as the `agent', we would use an ACSet morphism into this example from an ACSet with one wolf, zero edges, and one vertex, i.e.\ the \emph{shape} of a wolf.}
    \label{fig:lv}
\end{figure}

\subsection{A DSL for graph rewriting programs}

We use the theory of $\dk$ to implement a graph rewriting programming language, with data manipulation specified by rewrite rules acting upon ACSets (possibly with agents). These programs can be assembled from a small number of primitive generators. In conjunction with Catlab's general infrastructure for manipulating wiring diagrams, these primitives function as a powerful domain-specific language for agent-based modeling and programming via graph rewriting.

The domain and codomain of our morphisms of interest consist of sets of diagrams of the form in Figure~\ref{fig:state}, and coproducts thereof. Each such diagram, which we will call a \textit{trajectory}, is a sequence of `world states' $X_i$ with distinguished focuses $A_i\to X_i$. If the trajectory is represented by a variable \lstinline{t::Traj} in pseudocode, then let \lstinline{last(t)} return $X_n$, \lstinline{length(t)} return $n$, and \lstinline{t[i]} return $A_i \to X_i$. Let \lstinline{postcompose(t::Traj,f::Hom(X,Xi),i::Int)} compute the composite of \lstinline{f} with the partial maps from $i$ to $n$. This returns either a total morphism or \lstinline{nothing}. Let \lstinline{(t::Traj)+(b::Hom(A,Xn))} extend the trajectory with an identity partial map. Because there are now infinite sets as the domains and codomains of our morphisms, we adopt the following shorthand: when visualizing wiring diagrams, a wire labeled by an ACSet $A_n$ represents the set of all trajectories whose current agent is $A_n$. Stating a morphism is of type $A + B \rightarrow C$ for ACSets $A,B,C$ indicates the domain is the coproduct of the set of trajectories with current agent $A$ and the trajectories with current agent $B$ and that the codomain is the set of trajectories with current agent $C$.

\begin{figure}[!h]
    \centering

\[\begin{tikzcd}
	{A_1} & {A_2} & {\cdots} & {A_n} \\
	{X_1} & {X_2} & {\cdots} & {X_n}
	\arrow["\shortmid"{marking}, from=2-1, to=2-2]
	\arrow[from=1-1, to=2-1]
	\arrow[from=1-2, to=2-2]
	\arrow[from=1-4, to=2-4]
	\arrow["\shortmid"{marking}, from=2-3, to=2-4]
	\arrow["\shortmid"{marking}, from=2-2, to=2-3]
\end{tikzcd}\]
    \caption{A trajectory in the space of ACSets. $X_1$ and $X_n$ respectively represent the initial (resp. current) state of the world during the simulation, and each successive world state is related to the previous via a partial map (indicated by a ticked arrow). Each world state $X_i$ also has a distinguished focus $A_i\to X_i$.}
    \label{fig:state}
\end{figure}

Some generating morphisms for rewriting programs are shown in Figures~\ref{fig:icon1}~and~\ref{fig:icon2}. The semantics of the stateless generators is visually represented in Figure \ref{fig:icon1} and described here: \lstinline{Rewrite} extends a trajectory with a partial map induced by applying the rewrite rule (DPO, SPO, SqPO, and PBPO+ \cite{overbeek2021graph} semantics supported). Rewrite rules must also have their pattern $L$ and replacement $R$ related to a specific input agent shape $A$ and output agent shape $B$, respectively. The input agent shape imposes a strong constraint on valid matches via a triangle which must commute. If, nevertheless, multiple matches are valid, an arbitrary one is selected. If it is successfully rewritten, the $B$ outport is exited, otherwise the $A$ outport is exited. \lstinline{Weaken} extends a trajectory without changing the state of the world $X_n$ by precomposing the agent morphism. \lstinline{Strengthen} extends a trajectory via pushout, which simultaneously changes the agent shape and the state of the world. \lstinline{Init} switches the trajectory to a particular world state and agent, with no relation to the previous world state. \lstinline{Fail} can be given the semantics of raising an exception. Alternatively, if the context is the $\List+1$ monad, it could  silently produce an empty list.

The semantics of the \lstinline{ControlFlow} box is to redirect an input to one of its outports, possibly nondeterministically and possibly as a function of its trajectory data. These morphisms are of the form $\Hom_{\dk}(A,n\times A)$ for some set $n$ and have a Mealy transition function which is \textit{pure}, by which we mean it can be factored into a map $\phi: S \times A \rightarrow S \times (n\times A)$ (such that $\phi\then\pi_3=\pi_2$) followed by the monad unit $\eta$. 

\[\begin{tikzcd}
	{S\times A} & {S\times (n\times A)} \\
	& {S\times t\triangleleft(n\times A)}
	\arrow["{\text{pure}\ \phi}"', from=1-1, to=2-2]
	\arrow["\phi", from=1-1, to=1-2]
	\arrow["{S \times \eta\triangleleft(n\times A)}", from=1-2, to=2-2]
\end{tikzcd}\]

The semantics of the \lstinline{Query} box is a Mealy machine with state space $\List(\Hom_\acset) \times \nn$. The first element is the list of queued `agents' we have yet to process, and the second element keeps track of what time step (in the trajectory) the \lstinline{Query} box was originally entered. The dynamics are given by two functions, an update function $S \times A + C \to S$ and a readout function $S \times A + C \to A + B + 0$. Entering through the $A$ port at step $n$ has the significance of starting the \lstinline{Query} process anew; the internal state is overwritten to store $n$ and a list of morphisms $B \to X_n$. Entering the $C$ port communicates that we have finished one of the subagent's subroutines - the agent is removed from the box's state and all other agents are pushed forward to the current timestep. How the \lstinline{Query} box is exited is firstly determined by whether or not there are any remaining agents to process; if there are any, the $B$ port is exited with a $B$ agent. If there are none, we try to exit with original $A$ agent. If this agent is total when brought to the current state, we exit through the $A$ port, otherwise the $0$ port. In pseudocode, these functions are characterized below:

\begin{lstlisting}
updateA(_, traj::Traj) = (homomorphisms(B,last(traj)), length(traj))
updateC((bnext:bs,i), traj::Traj) = 
  ([b for b in bs if !isnothing(postcompose(traj,b,i))],i)
updateC(([],i::Int),_) = error

readout((bnext:bs,_), traj::Traj) = return (B,traj + bnext)
readout(([],i::Int), traj::Traj) = case postcompose(traj,traj[i],i) of 
  nothing => return (0, traj + initial(last(traj)))
  new_a_x => return (A, traj + new_a_x)
\end{lstlisting}

\begin{figure}[!h]
    \centering
    \includegraphics[width=0.9\textwidth]{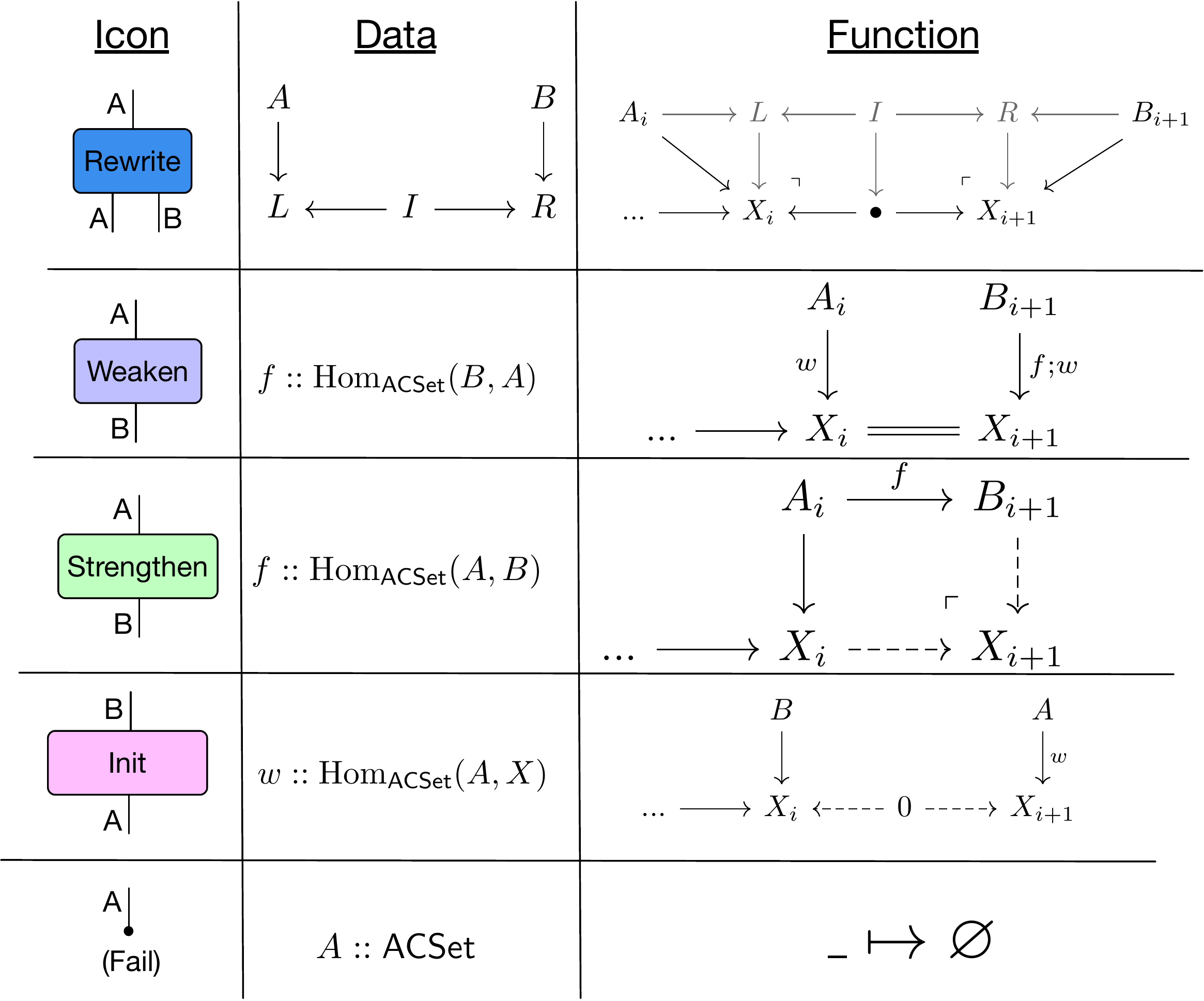}
    \caption{Primitive generating morphisms for graph rewriting programs which have trivial dynamics, i.e.\ they are pure functions from input ports to output ports, with exception to \lstinline{Fail}, which can raise an exception.}
    \label{fig:icon1}
\end{figure}

\begin{figure}[!h]
    \centering

    \includegraphics[width=0.9\textwidth]{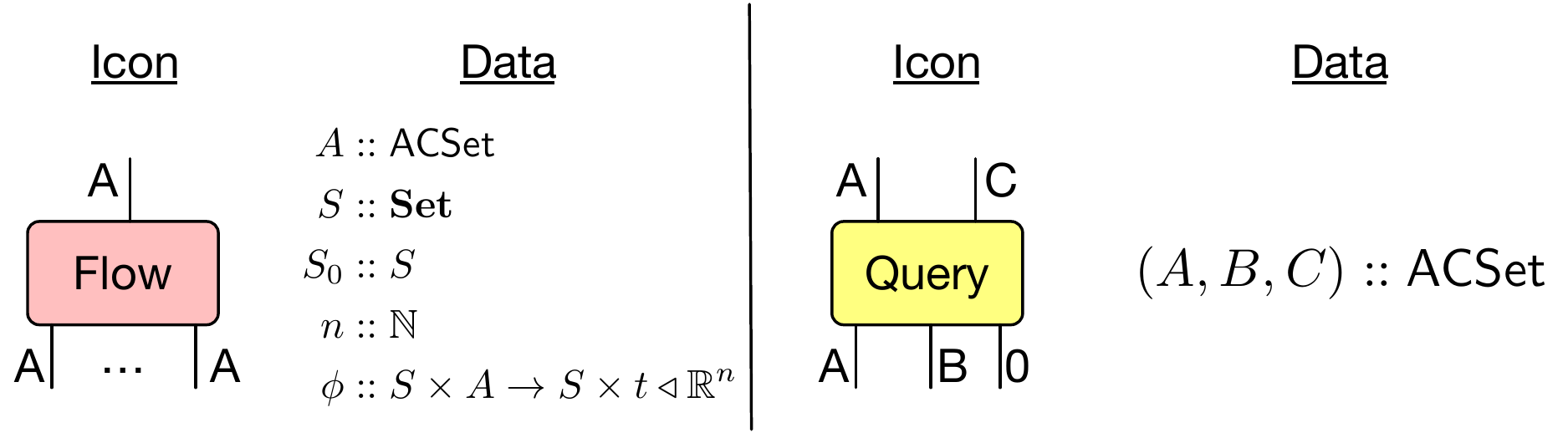}
    \caption{Primitive generating morphisms for graph rewriting programs which have nontrivial dynamics. \textbf{Left.} A generator for pure control flow. The value $\mathbb{R}^{\geq0}$ assigned to each outgoing wire is a weight to bias the probability of exiting through that port. \textbf{Right.} A generator for running a subroutine $B \to C$ for each $B$ agent in the the current world, $X_i$.}
    \label{fig:icon2}
\end{figure}

Although the `standard' means of using a \lstinline{Query} box is to connect a subroutine from the $B$ outport to the $C$ inport, yielding a $A \to A + 0$ interface, it can be used in more flexible ways. For example, a procedure which applies the rule $rw$ to a single (arbitrary) $A$ in the world state which satisfies property $\phi$ is visualized in Figure \ref{fig:q}.

\begin{figure}[!h]
    \centering
    \includegraphics[width=0.6\textwidth]{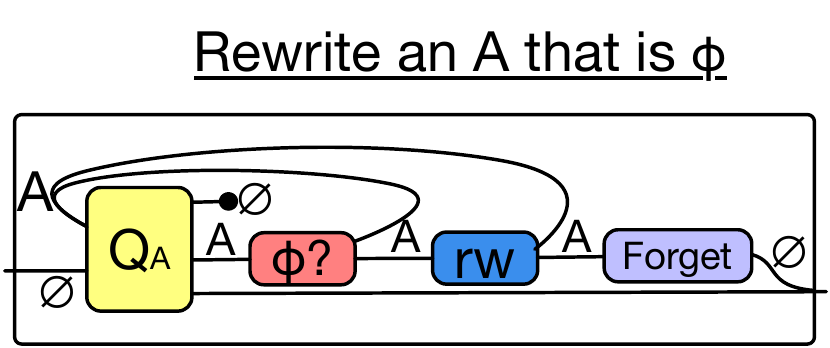}
    \caption{A program $0\to 0$, in context of the $\Maybe$ monad. Its input is a trajectory with an empty agent in its current time step. We then loop over possible agents of shape $A$. As soon as one is found which both satisfies property $\phi$ \textit{and} is successfully rewritten, we replace our focus $A$-shaped agent with the unique $0$-shaped agent and exit.}
    \label{fig:q}
\end{figure}

The above primitives were designed in order to be both easily interpretable (e.g.\ control flow, focus shifting, and state changing are all separated concerns) as well as expressive enough to reproduce popular agent-based models. However, these primitives can be extended in a principled rather than ad-hoc manner, due to the specification that new primitives must contain the data of a morphism in $\dk$. Furthermore, these primitives can be composed to libraries of operations at a higher level of abstraction for use by a domain expert.

\subsection{Implementation}

Our implementation is general over a finitary exceptional polynomial monad as in Definition \ref{def.exceptional}, using a codata structure for representing (potentially-infinite) behavior trees. These trees can be provided directly for each generator, although it is more convenient to programmatically generate them from a Mealy machine representation. By switching from $t=\Maybe$ to $\List+1$ or $\Dist+1$, we obtain nondeterministic or probabilistic simulations. For many graph rewriting examples, we wish to consider all possible matches, not merely an arbitrary match. A principled way to do this is to use a \lstinline{Rewrite} which produces a list of outputs, corresponding to all possible matches. This is particularly important if a program is being constructed to empirically test which graphs are reachable via a collection of rewrite rules. Furthermore, if certain matches are more likely than others, then a distribution on this output list can be incorporated into our programs.

One way in which we can take advantage of our formalism for rewriting programs is functorial data migration \cite{spivak2012functorial}. Given a functor $S\to T$ between ACSet schemas, a $\Sigma$ migration pushes $S$ ACSets forward to $T$ ACSets in a universal way, while $\Delta$ migration migrates data the other direction. As this is functorial, it is possible to apply these migrations to ACSet morphisms, rewrite rules, and entire graph rewriting programs. Another key advantage of working graphically is the various forms of composition of wiring diagrams, which allow for concise operations for organization of morphisms with $\otimes$ and $\then$ as well as hierarchically constructing programs via operadic substitution. 

\subsection{Example: discrete Lotka Volterra model}

The first example agent-based model showcased by Netlogo \cite{tisue2004netlogo} on their website is a model of wolf-sheep predation. An analogue of this model using the present framework is presented in Figure~\ref{fig:model}. Its construction leverages $\Sigma$ and $\Delta$ data migrations, operadic substitution, $t=\Maybe$, and the primitives \lstinline{Rewrite}, \lstinline{Weaken}, \lstinline{ControlFlow}, and \lstinline{Query}. A pedagogical walkthrough of the model's construction and running the model is provided in an accompanying \href{https://nbviewer.org/github/AlgebraicJulia/AlgebraicRewriting.jl/blob/compat_varacsets/docs/src/Dynamic%20Tracing.ipynb}{notebook}. An example construction in $\dk$ using the $\Dist$ monad can also be found in the accompanying notebook.

\begin{figure}[!h]
    \centering
    \includegraphics[width=\textwidth]{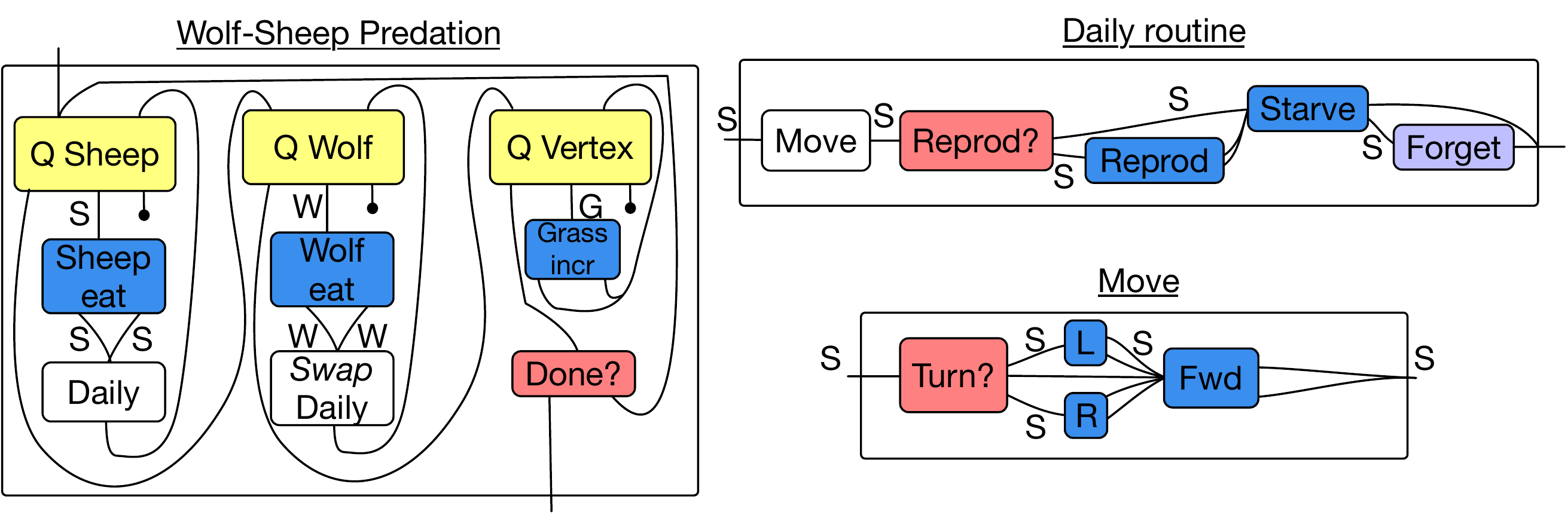}
    \caption{A program performing a wolf-sheep simulation. $\mathit{Swap}$ here represents a $\Delta$ migration which swaps wolves and sheep in the schema of Figure \ref{fig:lv}, allowing for a shared implementation of actions that are common to wolves and sheep. The agent shape $0$ is depicted as an unlabeled wire, although wires sharing the same target port need not all be labeled, as they must share the same agent for the program to be well-typed.}
    \label{fig:model}
\end{figure}

\section{Conclusion and Future work}\label{chap.conclusion}

A general theory of dynamical tracing guided the development of a graph rewriting DSL. Because this language is understood mathematically and expressed as combinatorial data, rather than programming syntax, high level rewrite procedures can be understood and serialized, independently from any particular implementation (although an implementation in Julia was developed). Abstract operations like data migration and various forms of composition become natural to perform on these procedures due to their interpretation as morphisms in $\dk$. Furthermore, the assimilation of Mealy machines, system interfaces and behaviors, as well as monadic effects via polynomial functors inspired an implementation that is both concise and general.

There are dimensions along which this work can be extended. The formalism we presented focuses on a dynamical system as something that is interacted with by a single agent. It remains frozen as the agent interacts with other systems, but this assumption could be relaxed as we consider multiple simultaneous agents---i.e.\ parallel programming with graph transformation. The notion of a trajectory presented here requires all ACSets involved to share the same schema (as there are no morphisms between ACSets of different schemas); this could be generalized to allow for multiscale modeling (data transformation in both a high level, `macroscopic' schema as well as a low level `microscopic' schema).

We plan to represent more existing agent-based models as well as develop new ones in this formalism. Agent-based models are a preferred style of modeling in any situation with emergent effects, such as physical phenomena (e.g.\ flow and diffusion simulations) \cite{bonabeau2002agent}, human transportation networks \cite{smith1995transims},
and epidemiology \cite{hunter2017taxonomy}. Adding Catlab support for incremental graph matching \cite{fan2013incremental} will be important for rewriting programs to be competitive in performance with established software like Netlogo and Kappa.

\bibliographystyle{splncs04}
\bibliography{references}
\end{document}